\newtheorem{thm}{Theorem}[section]
\newtheorem{cor}[thm]{Corollary}
\newtheorem{lem}[thm]{Lemma}
\theoremstyle{definition}
\theoremstyle{remark}
\newtheorem{definition}{Definition}
\DeclareMathOperator{\BHV}{BHV}
\begin{document}

\title{A combinatorial method for connecting BHV spaces representing different numbers of taxa}

\maketitle

\begin{center}




\noindent {\normalsize \sc Yingying Ren$^5$, Sihan Zha$^6$, Jingwen Bi$^2$, Jos\'{e} A. Sanchez$^1$, Cara Monical$^1$,  Michelle Delcourt$^3$, Rosemary Guzman$^1$, and Ruth Davidson$^4$}\\
\noindent {\small \it 
$^1$Department of Mathematics, University of Illinois Urbana-Champaign, Urbana, Illinois, 61801, U.S.A;\\
$^2$Department of Engineering, Cornell University, New York, New York, 10044, U.S.A.;\\
$^3$School of Mathematics, University of Birmingham, Edbagston, Birmingham, B15 2TS, U.K.\\
$^4$Departments of Mathematics and Plant Biology, University of Illinois Urbana-Champaign, Urbana, Illinois, 61801 U.S.A.\\
$^5$Departments of Mathematics and Computer Science, University of Illinois Urbana-Champaign, Urbana, Illinois, 61801 U.S.A.\\
$^6$Departments of Mathematics and Economics, University of Illinois Urbana-Champaign, Urbana, Illinois, 61801 U.S.A.}
\end{center}
\medskip
\noindent{\bf Corresponding author:} Ruth Davidson, Departments of Mathematics and Plant Biology, University of Illinois Urbana-Champaign, Urbana, Illinois, 61801, U.S.A.; E-mail:  redavid2@illinois.edu \\









\begin{abstract}
The phylogenetic tree space introduced by Billera, Holmes, and Vogtmann ($\BHV$ tree space) is a CAT(0) continuous space that represents trees with edge weights with an intrinsic geodesic distance measure.  The geodesic distance measure unique to BHV tree space is well known to be computable in polynomial time, which makes it a potentially powerful tool for optimization problems in phylogenetics and phylogenomics.  Specifically, there is significant interest in comparing and combining phylogenetic trees.  For example, $\BHV$ tree space has been shown to be potentially useful in tree summary and consensus methods, which require combining trees with different number of leaves.  Yet an open problem is to transition between $\BHV$ tree spaces of different maximal dimension, where each maximal dimension corresponds to the complete set of edge-weighted trees with a fixed number of leaves.   We show a combinatorial method to transition between copies of $\BHV$ tree spaces in which trees with different numbers of taxa can be studied, derived from its topological structure and geometric properties.  This method removes obstacles for embedding problems such as supertree and consensus methods in the $\BHV$ treespace framework.
\end{abstract}

\begin{center}
Keywords: Phylogenetic trees, Billera-Holmes-Vogtmann treespace, \\ supertree methods, consensus methods,  graph theory, CAT(0) spaces

\end{center}


Originally introduced in 200, the Billera, Holmes, and Vogtmann ($\BHV$) treespace  \citep{billera2001geometry} has long intrigued the mathematics and statistics communities, but performing computations relevant to the construction of the tree of life in this space that are of contemporary interest to the computational and systematic biology communities remains difficult for many reasons.  Yet significant progress has been made towards removing key obstacles to such computations, beginning with the software and polynomial-time algorithm introduced in \citep{owen2011fast}. This was a significant advance because $\BHV$ space is a CAT(0) space with an intrinsic geodesic distance measure, and the biological significance of how far apart two trees are is an essential issue for assessing the accuracy of phylogenies computed from both biological and simulated data. 

Traditional statistical analyses-which are key to assessing confidence levels in phylogeny estimation-are difficult to perform in $\BHV$ treespace as it is a non-Euclidean space \citep{benner2014point}. Yet there has been progress in the development of methods for performing statistical analyses in $\BHV$ treespace \citep{nye2011principal, barden2013central, miller2015polyhedral, weyenberg2016normalizing}. Further, continued exploration of the  geometric structure~\citep{lin2015convexity} and the use of such deeper understanding to improve optimization-based tree inference methods~\citep{skwerer2014optimization} are promising.  Much work remains to be done before the mathematical foundations of this space are fully explored to the extent where phylogeny reconstruction, evaluation, and related data analysis can be applied in this space. 

The contribution in this manuscript is to provide a combinatorial paradigm for relating copies of $\BHV$ treespace that correspond to trees with differing numbers of leaves as well as differing internal structures.  In particular, there is a unique copy of $\BHV$ treespace in which components of maximal dimension are determined by the number of internal edges of binary trees. This is an equivalent notion to identifying copies of $\BHV$ space that correspond to trees with $n$ leaves that are fully resolved; i.e. those not containing polytomies. We present a combinatorial method with a mathematical foundation for moving between copies of $\BHV$ space that can be identified with fully resolved phylogenies with $n$ leaves.  

In the first Section, ``Mathematical Foundations and Definitions", we give a broad overview of the mathematical foundation and common notation from previous publications, as well as novel definitions required for our results and new notation used in this paper. Our results underlying the combinatorial paradigm developed to transition between copies of $\BHV$ space designed to study sets of trees with different numbers of leaves are presented in Section entitled ``Results." In the ``Discussion" Section, we address the potential for applications of our combinatorial paradigm in computational biology that were not possible without a method to move between $\BHV$ spaces corresponding to trees with different numbers of taxa. 


\section{Mathematical Foundations and Definitions}\label{sec:math}
\subsection{Phylogenetic Trees}
A \emph{phylogeny} is a mathematical model of the common evolutionary of a group of taxa.  For example, the taxa may be genes, species, or individuals within a conspecific population. We adopt the convention for this manuscript that a phylogeny is a tree graph, and refer to phylogenies as \emph{phylogenetic trees}.  In a fully resolved phylogenetic tree the evolutionary history is represented by a tree $T$ with a label set assigned to the \emph{leaves} of degree one (which also represent the taxa under study) and each internal vertex of $T$ has degree of at least 3.  The internal vertex set represents the branching points in evolutionary history that result in taxon divergence due to evolutionary events such as point mutation, recombination \citep{kim2016structural}, or gene inversion \citep{francis2014algebraic}. Yet we adopt the perspective that despite these types of events, evolution is fundamentally treelike on a large scale, even when forces such as lateral gene transfer are the likeliest explanation for speciation history, which is supported by publications such as \citep{abby2012lateral}.  Further discussion of such important evolutionary events leading to non-treelike structures is not relevant to our results.  

This paper views all phylogenetic trees as non-rooted, and thus the tree structures show relative similarities and differences between species instead of an implied chronological order.  From this perspective, the topology (shape) of a phylogenetic tree with three leaves or fewer does not provide any biological information for the species described. Therefore, all definitions and theorems below focus on trees with four leaves or more. 

\begin{figure}[ht!]
\includegraphics[scale=.4]{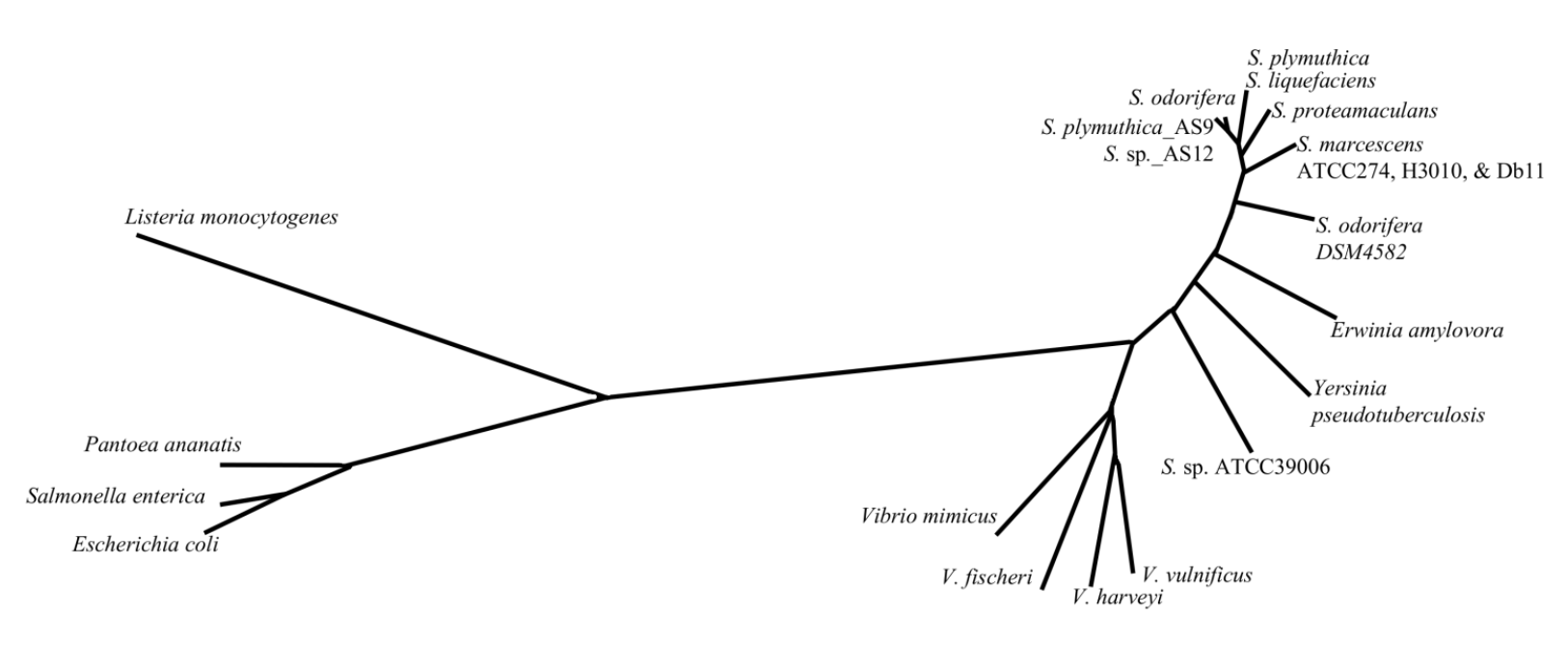}
\centering
\caption{Unrooted phylogenetic tree for the LuxS gene, from the Supplementary Material published in \citep{joyner2014use}}
\label{fig:phylogenetictree}
\centering
\end{figure}

For clarity, in this manuscript we use the same notation for trees as in \citep{owen2011fast}. A phylogenetic tree is a tree $T = (X, E, \Sigma)$, where $X$ is the label set assigned to the leaves of the tree,  $E$ is the set of interior edges, and $\Sigma$ is the set of splits of the set $X$ induced by the interior edges. In other words, the split $X_{e}|\overline X_{e}$ associated with edge $e \in E$ represents the partition of $X$ introduced by removing the edge $e$ from $T$.  

Following \citep{semple2003phylogenetics}, we say two splits associated with edges $\{e, f\} \in E$, $X_e|\overline X_e$ and $X_f|\overline X_f$, are \textbf{compatible} if one of the sets $$X_{e} \cap X_{f}, X_{e}\cap \overline X_{f}, \overline X_{e} \cap X_{f}, \overline X_{e}\cap \overline X_{f}$$ is empty. This is equivalent to asserting that one of the following set relationships is valid: $$X_{e} \subset X_{f}, X_{f} \subset X_{e}, X_{e} \subset \overline X_{f}, \overline X_{f} \subset X_{e}, \overline X_{e} \subset X_{f}, X_{f} \subset \overline X_{e}, \overline X_{e} \subset \overline X_{f},\overline X_{f} \subset \overline X_{e}.$$  Splits induced by a leaf edge are referred to as \emph{trivial}, as they provide no information from the perspective we adopt in this manuscript about the evolutionary relationships in the phylogeny. Figure \ref{fig:phylogenetictree} shows an unrooted phylogenetic tree inferred from biological data that was published in the supplementary materials for  \citep{joyner2014use}.  One can observe the biparititons of the taxa that are non-trivial induced by the internal edges of the tree. These correspond to set bipartitions of $X$ where each subset in $X_e|\overline X_e$ has cardinality at least two.

\subsection{Billera-Holmes-Vogtmann ($\BHV$) Tree Space}\label{sec:BHVspace}

$\BHV$ tree space is a continuous tree space that embeds trees using their split weights-where weights are the length of the internal edges corresponding to a non-trivial split. This tree space is formed by a set of Euclidean subspaces, called \emph{orthants} (a generalization of the notion of, for example, a quadrant in $\mathbb{R}^{2}$).  Each orthant of $\BHV$ space uniquely represents phylogenetic trees of different split weights but the same underlying topology. Orthants are joined together by lower dimensional orthants whenever the topologies they represent share common splits. Yet lower-dimensional orthants in a copy of $\BHV$ tree space correspond to tree topologies that have an internal vertex of degree higher than 3.  Therefore, paths between orthants of maximal dimension in $\BHV$ space that cross lower dimensional orthants can be thought of simply as collapsing internal edges for one labeled topology, thereby introducing a polytomy, and expanding the polytomy in a way that corresponds to a different labeled topology.

$\BHV$ space uses the \emph{geodesic} introduced in \citep{billera2001geometry} as its \emph{intrinsic} metric, which is defined as the shortest path between two points that lies completely inside the space. In  \citep{billera2001geometry} it was also shown that $\BHV$ tree space is CAT(0), or has globally non-positive curvature, and the geodesic is unique. The paper \citep{owen2011fast} introduced an algorithm with polynomial time complexity for computing the geodesic distance; this was a major advance towards making $\BHV$ space an object for the study of phylogenetic trees.

\begin{definition}\label{BHVn}
Denote a $\BHV$ tree space in which the maximum-dimensional orthants corresponding to fully resolved binary trees with $n$ taxa as $\BHV_{n}$. In other words, all internal vertices have degree three, such as in Figure \ref{fig:phylogenetictree}.
\end{definition}

In $\BHV_{n}$ the maximum-dimensional orthants have dimension $k$, where $k = n-3$ is the number of internal edges in a fully resolved binary tree with $n$ taxa. We briefly mention that $\BHV_{n}$ can be embedded in $\mathbb{R}^{N}$, where $N=2^{n-1}-n-1$ is the number of possible splits on $n$ taxa. This embedding is not useful without requiring the use of an \emph{extrinsic} metric, such as in \citep{lin2016tropical}.  Extrinsic metrics are useful in contexts beyond the scope of this paper.  Here we only mention this to clarify the difference between an embedding of $\BHV_{n}$ in another space for visualization purposes and mathematical foundations for different lines of research regarding $\BHV_{n}$. Our results only rely on the intrinsic metric of the geodesic distance in $\BHV_{n}$.  In other words, we follow the convention that outside of the CAT(0) surfaces that comprise $\BHV_{n}$, there is no mathematical information relevant to our results. 
\begin{figure}[ht!]
\includegraphics[scale=.2]{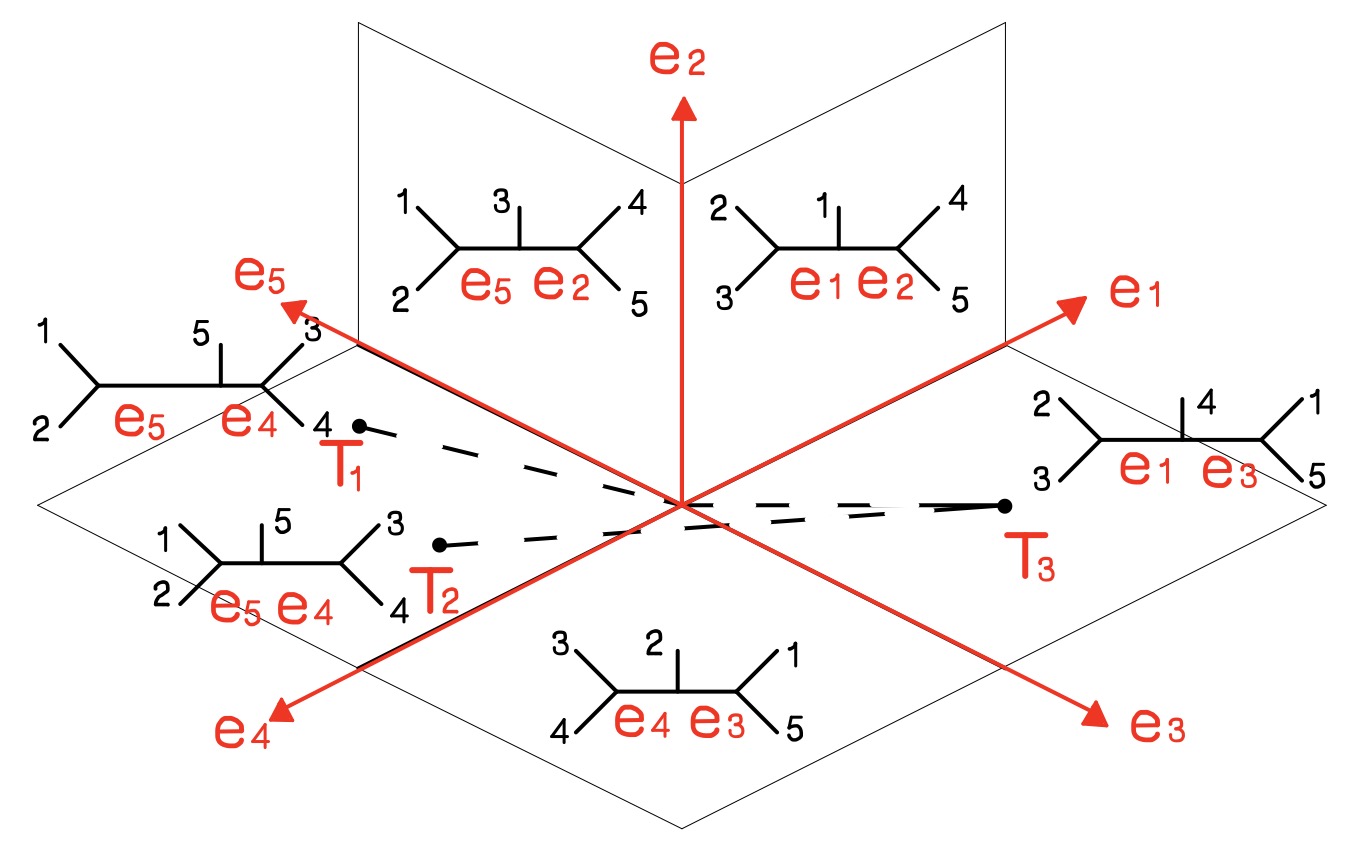}
\centering
\caption{Part of $\BHV_{5}$ embedded in $\mathbb{R}^3$ for visualization purposes:\\
There are fifteen two-dimensional orthants in $\BHV_{5}$ corresponding to the \\
fifteen labeled topologies on a five-leave unrooted phylogeny. \\ Note that the one-dimensional quadrants labeled with edges \\ correspond to the edges that collapse along the path traversed in the space. \\ All edges collapse at the origin into a star phylogeny.}
\label{fig:BHV5Megan}
\centering
\end{figure}

\subsection{Fundamental Definitions for our Results}

\begin{definition}\label{def:BHVconnectioncluster} A \textbf{$\BHV$ Connection Cluster} is defined upon the following collection of objects: (1) \emph{start tree} $T_{s}$, an unweighted binary phylogenetic tree, (2) \emph{start dimension} $n$, the number of leaves in the start tree, and (3) \emph{connection step} $\ell$, the number of new leaves added to the start tree. \end{definition} 
The $\BHV$ Connection Cluster denoted $C_{T_{s},n,\ell}$ is the set of unweighted fully resolved trees with $n+\ell$ leaves obtained from adding $\ell$ leaves to arbitrary edges (including leaf edges) of the start tree $T_{s}$ starting with $n$ leaves. Thus the new leaf set corresponding to this $\BHV$ Connection Cluster is the set containing the $n$ leaves of the start tree $T_{s}$ and the $\ell$ new leaves.  We note that the connection step $\ell$ creates a new tree containing both the splits present in the start tree $T_{s}$ in $\BHV_{n+\ell}$ and new splits induced by the new edges added by the connection step.  Hence, the maximum-dimensional orthants in $\BHV_{n + \ell}$ now have dimension $n+\ell -3$.

\begin{figure}[ht!]
\includegraphics[scale=.18]{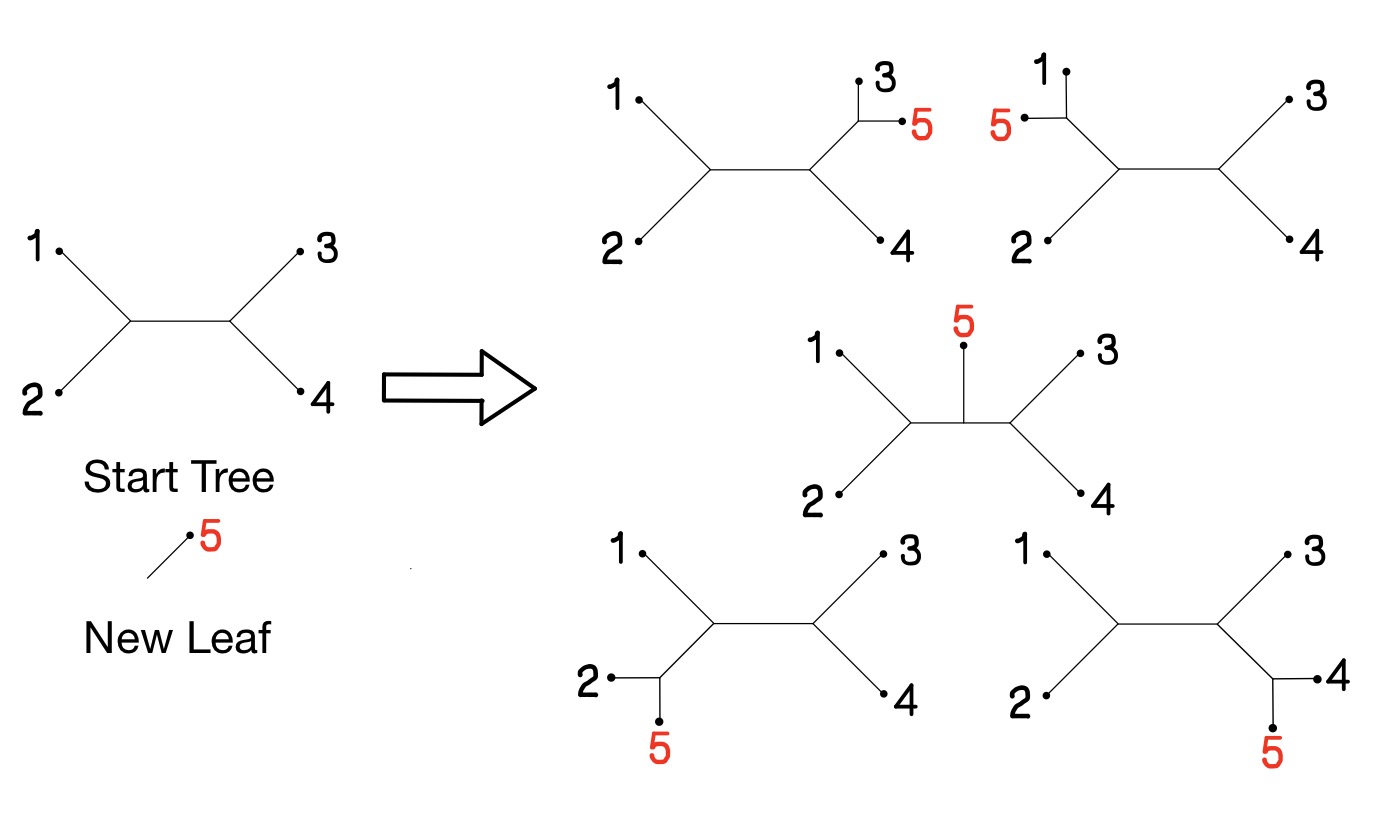}
\centering
\caption{The $\BHV$ Connection Cluster $C_{T_{s}, 4, 1}$ with start tree $T_{s} \in \BHV_{4}$}
\label{fig:BHVCluster}
\end{figure}

Trees in a $\BHV$ Connection Cluster, $C_{T_{s},n,\ell}$, are unweighted fully resolved $(n+\ell)$-leaf trees. If we assign arbitrary edge weights to every tree in the cluster, and define a split weights vector for each tree, we will obtain a set of $(n+\ell-3)$-dimension orthants in $\BHV_{n+\ell}$ which shares the same leaf set as the cluster.

\begin{definition}\label{def:BHVconnectionspace}
 Define this set of $(n+\ell-3)$-dimension orthants and the lower dimensional orthants associated with them as a \textbf{$\mathbf{\BHV}$ Connection Space}, or $S_{T_{s},n,\ell}$. Each point in this space represents a tree, and the coordinates of the point are the split weights of the tree. We define the \textbf{leaf set} of the $\BHV$ Connection Space to be the same leaf set as the corresponding Connection Cluster. \end{definition} Note that the one-dimensional orthants in this space will correspond to splits on the leaf set. We also refer to a one-dimensional orthant as the axis of the space. The distance between two points in the space is the same as the geodesic path for the two points in $\BHV_{n+\ell}$.  Intuitively, the $\BHV$ Connection Space is a group of Euclidean orthants glued together, and each individual orthant contains trees with different edge weights but same topology, as in $\BHV$ tree space.
\begin{figure}[ht!]
\includegraphics[scale=.18]{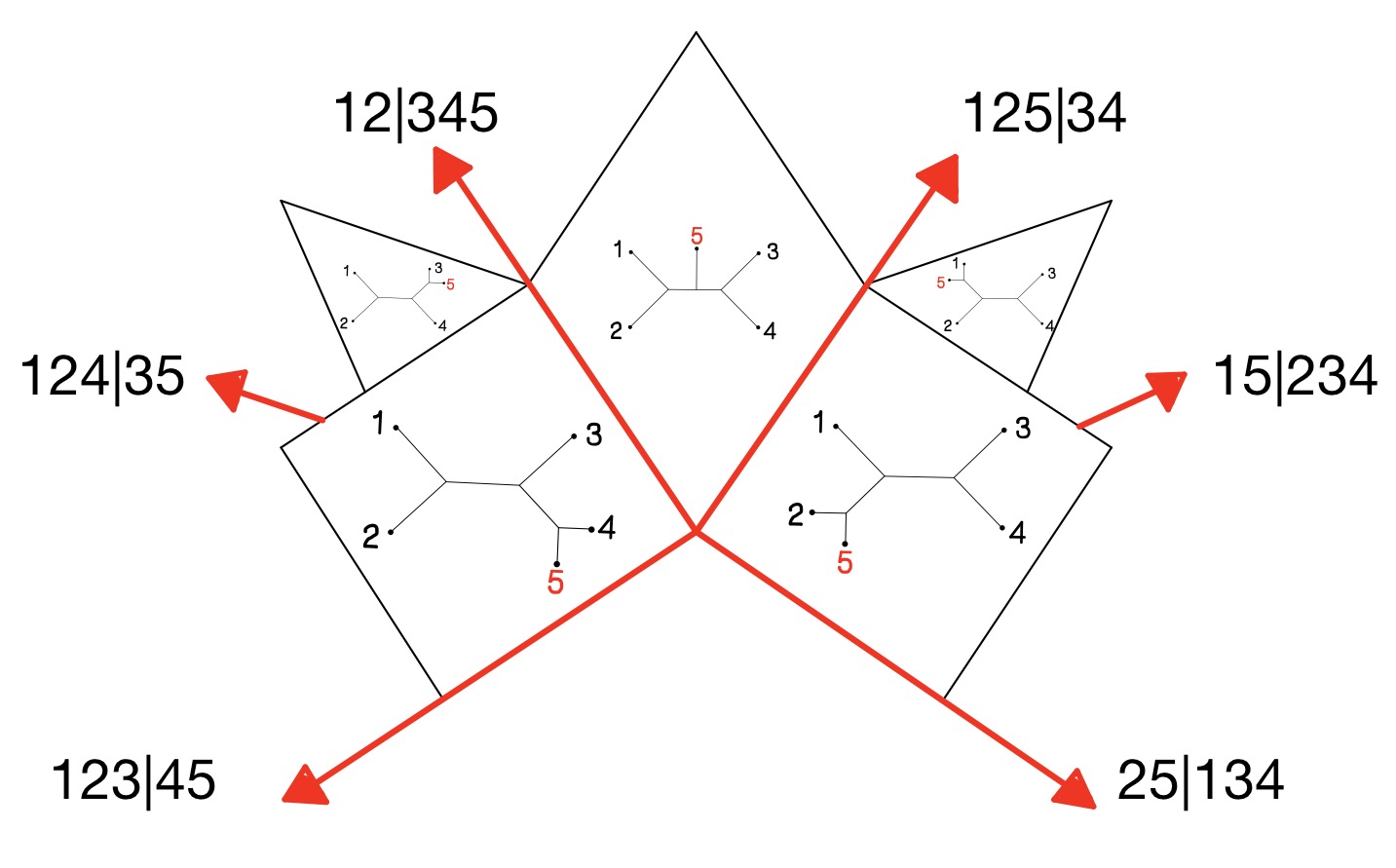}
\centering
\caption{The $\BHV$ Connection Space $S_{T_{4}, 4, 1}$}
\label{fig:BHVSpace}
\centering
\end{figure}
\begin{definition}\label{def:BHVconnectiongraph} A \textbf{$\mathbf{\BHV}$ connection graph}, or $G_{T_{s},n,\ell}$ has vertex set $V_{C}$: the set of one-dimensional orthants in $S_{T_{s},n,\ell}$. As mentioned earlier, $V_{C}$ is equivalently the set of splits in all the trees in $C_{T_{s}, n, \ell}$. The edge set of $G_{T_{s}, n, \ell}$ is $E$: \{$v_{i}v_{j}\}$: $v_{i},v_{j}\in V_{C}$ and there exists a tree $T \in C_{T_{s},n,\ell}$ such that $T$ contains the two sets of splits represented by $\{v_{i}$ and $v_{j}\}$. The leaf set of $G_{T_{s},n,\ell}$ is the same as the leaf set of $S_{T_{s},n,\ell}$.
\end{definition}
\begin{figure}[ht!]
\includegraphics[scale=.10]{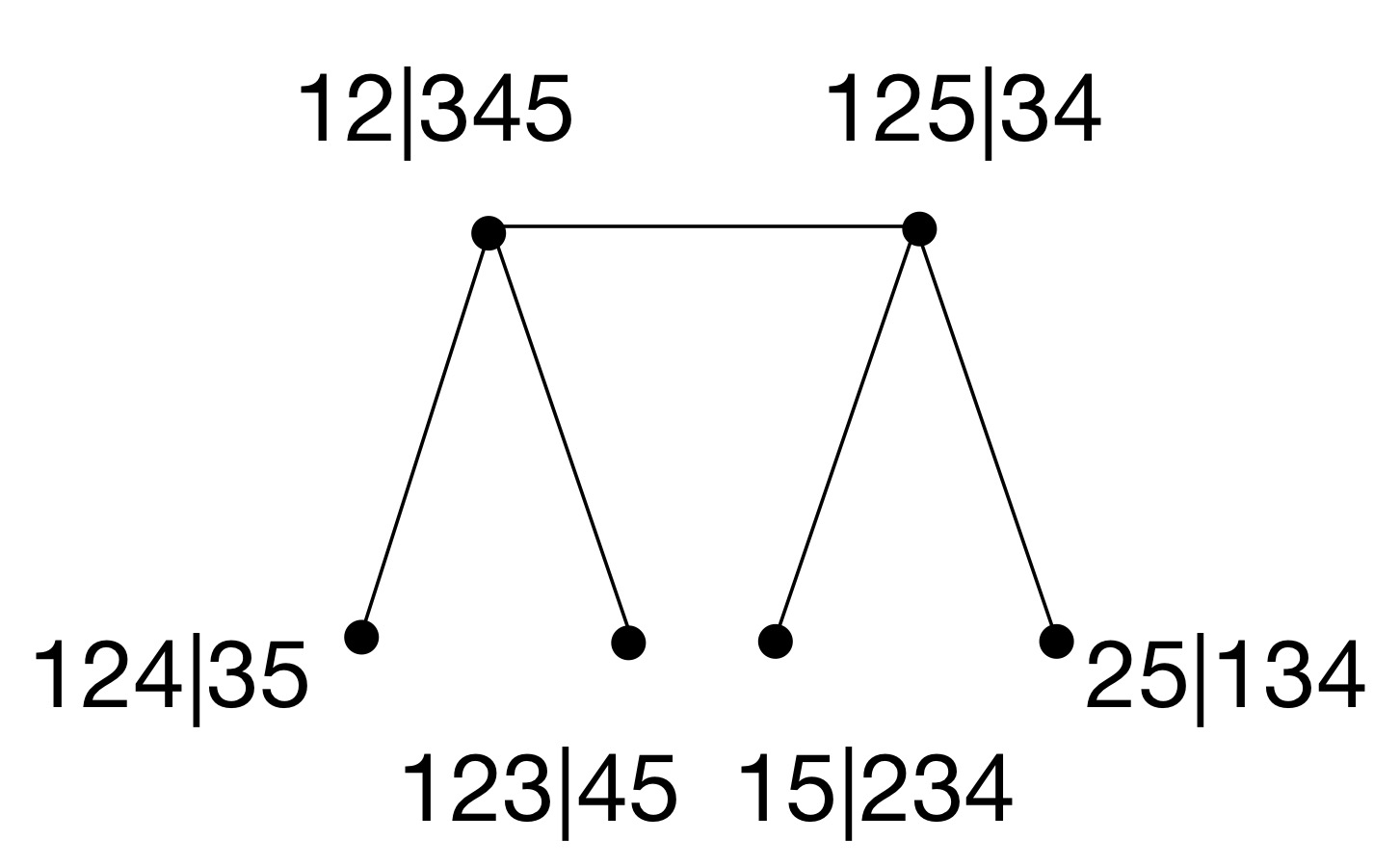}
\centering
\caption{The $\BHV$ Connection Graph $G_{T_{s}, 4, 1}$, where $T_{s} \in \BHV_{4}$}
\label{fig:BHVConnnectionGraph}
\centering
\end{figure}

\section{Results}\label{sec:results}

\begin{thm}\label{thm:Cardinalityofthecluster} The number of trees in a $\BHV$ Connection Cluster $C_{T_{s},n,\ell}$ is $\frac{(2(n+\ell)-5)!!}{(2n-5)!!}.$
\end{thm}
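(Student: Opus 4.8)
The plan is to count the trees in $C_{T_s,n,\ell}$ by constructing them one new leaf at a time and showing that this construction sets up a bijection onto the cluster. The combinatorial input I would use is that an unrooted, fully resolved binary tree with $m$ leaves has exactly $2m-3$ edges: it has $m$ leaves of degree one and $m-2$ internal vertices of degree three, hence $2m-2$ vertices and therefore $2m-3$ edges. Attaching one new labeled leaf to such a tree amounts to subdividing a single edge with a new degree-three vertex and joining the new leaf to it by a pendant edge; there are therefore exactly $2m-3$ ways to do this, and one checks that subdividing distinct edges produces distinct split sets, hence distinct trees.

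Next I would fix an ordering $y_1,\dots,y_\ell$ of the $\ell$ new leaves and add them sequentially, recording at each step which edge is subdivided. After $y_1,\dots,y_{k-1}$ have been placed the current tree has $n+k-1$ leaves and thus $2(n+k-1)-3=2n+2k-5$ edges, giving that many choices for $y_k$. The number of such choice-sequences is then $\prod_{k=1}^{\ell}(2n+2k-5)=(2n-3)(2n-1)\cdots\bigl(2(n+\ell)-5\bigr)$. Writing $\bigl(2(n+\ell)-5\bigr)!!=\bigl[1\cdot3\cdots(2n-5)\bigr]\cdot\bigl[(2n-3)\cdots(2(n+\ell)-5)\bigr]$ and recognizing the first bracket as $(2n-5)!!$ shows this product equals $\frac{(2(n+\ell)-5)!!}{(2n-5)!!}$, the claimed value.

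The crux is then to verify that the number of choice-sequences equals the number of distinct trees, i.e. that the sequence-to-tree map is a bijection onto $C_{T_s,n,\ell}$. For injectivity I would argue by reverse induction on $\ell$: in the final labeled tree the leaf $y_\ell$ occupies a determined position, so deleting $y_\ell$ and suppressing the resulting degree-two vertex simultaneously recovers the tree built from $y_1,\dots,y_{\ell-1}$ and identifies the edge that $y_\ell$ subdivided; iterating reconstructs the whole sequence, so distinct sequences give distinct trees. For surjectivity I would use that a tree lies in the cluster exactly when deleting all $\ell$ new leaves and suppressing degree-two vertices returns $T_s$; since this restriction is independent of the deletion order, every cluster tree is reachable by the additions carried out in the fixed order $y_1,\dots,y_\ell$.

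I expect the injectivity step to be the main obstacle, as it is where one must rule out the possibility that different intermediate edge choices converge to the same final topology. The key is that the new leaves are labeled and that leaf-deletion-with-suppression is a well-defined inverse of the leaf-addition step, so each tree records its own construction sequence uniquely; once this reversibility is in hand, the bijection and hence the formula follow from the product computed above.
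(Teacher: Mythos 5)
Your proof is correct and follows essentially the same route as the paper's: count the $2m-3$ edges of a fully resolved $m$-leaf tree, add the $\ell$ labeled leaves one at a time, and telescope the product $(2n-3)(2n-1)\cdots(2(n+\ell)-5)$ into the ratio $\frac{(2(n+\ell)-5)!!}{(2n-5)!!}$. The only substantive difference is that you explicitly verify the choice-sequence-to-tree map is a bijection via leaf deletion and vertex suppression, a point the paper dispatches with the informal remark that additions to distinct edges are ``probabilistically independent events''; your treatment of that step is the more careful one.
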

\begin{proof}
A fully-resolved $m$-leaf tree $T$ has $2m-3$ edges, consisting of $m$ leaf edges and $m-3$ internal edges; thus there are $2m-3$ different fully-resolved $(m+1)$-leaf trees that can be obtained from adding a single new leaf to the $m$-leaf start tree. Adding new leaves to distinct original edges are probabilistically independent events. Thus the number of $(n+\ell)$-leaf trees obtained by adding $\ell$ new leaves to an $n$-leaf start tree $T_{s}$ is
\begin{eqnarray*}
&(2n-3)& \times (2n-1)\times \ldots \times(2(n+\ell-1)-3)\\
&=&\frac{1\times3\times \ldots \times(2n-5)\times(2n-3)\times(2n-1)\times \ldots \times(2(n+\ell-1)-3)}{1\times3\times \ldots \times(2n-5)}\\
&=&\frac{(2(n+\ell-1)-3)!!}{(2n-5)!!} = \frac{(2(n+\ell)-5)!!}{(2n-5)!!}.
\end{eqnarray*}
\end{proof}

\begin{thm}\label{thm:dimensionofthespace} $\BHV$ Connection Space $S_{T_{s},n,\ell}$ has dimension $2^{\ell}(2n-2)-\ell-n-1$.
\end{thm}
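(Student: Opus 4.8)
The plan is to read the dimension of $S_{T_s,n,\ell}$ as the number of distinct coordinate axes it carries, which by Definitions~\ref{def:BHVconnectionspace} and~\ref{def:BHVconnectiongraph} is exactly the number of distinct one-dimensional orthants, i.e. the number of distinct nontrivial splits that occur in \emph{some} tree of the cluster $C_{T_s,n,\ell}$ (the split-coordinate axes are linearly independent, so the spanned dimension equals their number). Thus the theorem reduces to counting these distinct splits. Write $O$ for the $n$ original leaves of $T_s$ and $N$ for the $\ell$ newly added leaves, so the full leaf set is $O\cup N$.

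First I would classify each nontrivial split $A\mid B$ occurring in a cluster tree $T$ by restricting it to the original leaves. Deleting the new leaves from $T$ and suppressing degree-two vertices recovers $T_s$, so the subtree of $T$ spanning $O$ is a subdivision of $T_s$. Hence there are exactly two cases. In the \emph{mixed} case both $A\cap O$ and $B\cap O$ are nonempty; then the edge inducing $A\mid B$ must lie on the $O$-spanning subtree, so its restriction $(A\cap O)\mid(B\cap O)$ is forced to be an edge-split of $T_s$ (a nontrivial split for an interior edge, a singleton for a leaf edge). In the \emph{pure} case one side, say $A$, is contained in $N$, and nontriviality then means $|A|\ge 2$.

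Next I would count within each case. For the mixed case I fix an edge $e$ of $T_s$ with split $C\mid D$ and count the splits $(C\cup A_N)\mid(D\cup B_N)$ obtained by distributing $N$ as $A_N\sqcup B_N$; distinct subsets $A_N\subseteq N$ give distinct splits, so the candidates are indexed by the $2^{\ell}$ subsets of $N$ minus those yielding a trivial split. For an interior edge both $|C|,|D|\ge 2$, so all $2^{\ell}$ subsets give nontrivial splits, while for a leaf edge one side is a singleton, so only $A_N=\emptyset$ is trivial and $2^{\ell}-1$ survive. Since $T_s$ has $n-3$ interior edges and $n$ leaf edges (independent of its topology), the mixed case contributes $(n-3)2^{\ell}+n(2^{\ell}-1)=(2n-3)2^{\ell}-n$. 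For the pure case I count the subsets $A\subseteq N$ with $|A|\ge 2$, namely $2^{\ell}-\ell-1$. Adding and simplifying gives $(2n-2)2^{\ell}-n-\ell-1$, the claimed value.

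The subset count and the final simplification are routine; the real content, and the step I expect to be the main obstacle, is justifying the candidate list in both directions. Completeness (every occurring split is on the list) follows from the $O$-restriction argument above, but \emph{realizability} — that every listed candidate genuinely occurs as a split of some tree in $C_{T_s,n,\ell}$ — must be shown by explicit construction: given a target distribution I would attach the leaves of $A_N$ as a subtree strictly on the $C$-side of $e$ and those of $B_N$ strictly on the $D$-side (for the pure case, build a binary subtree on $A$ and graft it onto a single edge), then verify that the intended edge survives in the resulting fully resolved $(n+\ell)$-leaf tree with exactly the desired split. Confirming that these constructions always produce admissible cluster trees, and that distinct candidates never collapse to the same split, is where the argument needs the most care.
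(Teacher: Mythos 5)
Your proposal is correct and follows essentially the same route as the paper: both classify the candidate splits by their trace on the original leaf set, yielding the same three cases (one side consisting purely of new leaves; extensions of a trivial leaf split of $T_{s}$; extensions of an interior split of $T_{s}$) with the same counts $2^{\ell}-\ell-1$, $n(2^{\ell}-1)$, and $(n-3)2^{\ell}$. If anything you are more careful than the paper, which counts label assignments and checks the cases are exhaustive and disjoint but never explicitly verifies that each candidate is realized as a split of some tree in the cluster --- the grafting construction you sketch at the end supplies exactly that missing direction.
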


\begin{proof}
$\BHV$ Connection Space $S_{T_{s},n,\ell}$ has dimension equal to the number of unique nontrivial splits from all the trees in the corresponding $\BHV$ Connection Cluster, $C_{T_{s},n,\ell}$.  Because each split has the form $X_e|\overline X_e$, a leaf will either belong to $X_e$ or $\overline X_e$. If we assign label $1$ or $2$ to every leaf, then the set of leaves with label $1$, $X_1$, and the set of leaves with label $2$, $X_2$, will form a split $X_1|X_2$ on the leaf set. 
The new tree in $\BHV_{n +\ell}$ is denoted $T_{t}.$

We have $2^{n+\ell}$ different ways of assigning the labels $1$ or $2$ the new leaf set of size $n+\ell$, but there are certain constraints for a label assignment split to be a tree split. Consider the following three cases:
\begin{enumerate}
    \item We first consider splits where all leaves from the start tree are in one side of the split. Namely, if $V$ is the leaf set of $T_{s}$ and $L$ is the new leaf set, then in this first case, we are counting all splits with the form $$Y_e|\overline Y_e, V\subset Y_e, Y_e\cup \overline Y_e = L.$$ 
    Without loss of generality, assign label $1$ to all leaves from the original start tree. There are $2^\ell$ choices to assign a label to the $\ell$ new leaves in $T_{t}.$ However, the choice to assign label $1$ to all $\ell$ new leaves in $T_{t}$ results in all leaves in the tree having the same label, which will not define a split. Moreover, if $\ell-1$ new leaves get label $1$ in $T_{t}$, and the remaining one leaf gets label $2$, then we have a split defined by the leaf edge, which is a trivial split. Since we have $\ell$ new leaves, there are $\ell$ assignments resulting in this situation. So we are left with $2^{\ell}-1-\ell$ different label assignments for $T_{t}$.
    \item We consider the splits that are derived from the trivial leaf splits of the start tree $T_{s}$ by adding the new $\ell$ leaves to $T_{s}$ to obtain $T_{t}$. Again, let $V$ be the leaf set of $T_{s}$ and $L$ be the new leaf set ($|L| = \ell$) in $T_{t}.$  We are counting splits of the form $$(\{x \in X \}\cup Y_{e})|(Y_{e} \cup V\setminus \{v\}), v\in V, Y_e\cup \overline Y_{e} = L.$$
    Assign label $1$ to one leaf from the start tree and label $2$ to all remaining leaves in the start tree. Denote the leaf with label $1$ as $v_{1}$. We again have $2^{\ell}$ choices for assigning labels to the $\ell$ new leaves in $T_{t}$. However, if all $\ell$ new leaves get label $2$, we have a trivial leaf edge split between $v_{1}$ and the rest of the leaves. So we have $2^{\ell}-1$ choices. Since we choose $v_{1}$ randomly, we have $n$ different choices for $v_{1}$. Therefore, we have $(2^{\ell}-1)n$ different label assignments.
    \item In the last case, we consider splits that are derived from a nontrivial leaf split of $T_{s}$ by adding the new $\ell$ leaves to obtain $T_{t}$. Using the same notation as in the previous case, we are counting splits in the form of $$(Y_e\cup Z_e)|(\overline Y_e \cup \overline Z_e), Y_e\cup\overline Y_e = L, Z_e\cup\overline Z_e = V, |Z_e|\geq2, |\overline Z_e|\geq 2.$$
    Recall that $T_{s}$ is a fully resolved binary tree with $n$ leaves, thus the start tree defines $n-3$ unique nontrivial splits. For each split, label the leaves from different sides of the split with label $1$ and $2$ respectively. Then we have $2^{\ell}$ different ways of assigning labels for the $\ell$ new leaves. So there are $(n-3)2^{\ell}$ different label assignments for $T_{t}$.
\end{enumerate}

To see these three cases are mutually exclusive to each other, note that we have different label assignments for the leaves in $T_{s}$ in each case. Specifically, we let the leaves in $T_{s}$ share in the same label in case (1), assign a different label to a single leaf from the rest of the leaves in case (2), and assign different labels based on a split of $T_{s}$ in case (3). These three cases also cover all possible assignments for $T_{s}$. Therefore, the total number of splits in $T_{t}$ is \[2^{\ell}-1-\ell+(2^{\ell}-1)n+(n-3)2^{\ell} = 
2^{\ell}(2n-2)-\ell-n-1.\]
\end{proof}

Because a three-leaf unweighted tree in $\BHV_{3}$ does not provide any biological information, the $\BHV$ Connection Space with start tree $T_{s} \in \BHV_{3}$ and connection step $\ell$ is the same space as $\BHV_{l+3}$ with the same leaf set as the Connection Space. To observe this more carefully, we can compute the number of orthants and dimension of $S_{T_{s},3,\ell}, T_{s} \in \BHV_{3}$ using the formulas from Theorems \ref{thm:Cardinalityofthecluster} and \ref{thm:dimensionofthespace}: the number of maximum-dimensional orthants is $$\frac{(2(3+\ell)-5)!!}{(2\cdot 3-5)!!} = \frac{(2(3+\ell)-5)!!}{1!!} = (2(\ell+3)-5)!!,$$ and the dimension of these orthants is $$2^{\ell}(2\cdot 3-2)-\ell-3-1 = 2^{(\ell+3)-1}-(\ell+3)-1.$$  We remark that our formulas do not conflict with the dimension calculations of $\BHV_{\ell+3}$ given in \citep{billera2001geometry} and \citep{owen2011fast}.
\\

\begin{lem}\label{lem:compatiblesplitsingraph}For any two vertices in $\BHV$ Connection Graph $G_{T_{s},n,\ell}$, if they represent two compatible splits, then they are connected in the $\BHV$ Connection Graph.
\end{lem}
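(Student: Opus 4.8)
The plan is to recast the desired adjacency as a statement about displaying splits on a single tree. By Definition~\ref{def:BHVconnectiongraph}, two vertices $v_1,v_2$ of $G_{T_s,n,\ell}$ are joined by an edge exactly when the splits $\sigma_1,\sigma_2$ they represent are both contained in some tree $T\in C_{T_s,n,\ell}$, so it is enough to exhibit one cluster tree that simultaneously displays the two compatible splits. Two facts would be recalled at the outset. First, a characterization of cluster trees implicit in Definition~\ref{def:BHVconnectioncluster}: writing $X=V\cup L$ with $V$ the $n$ leaves of $T_s$ and $L$ the $\ell$ added leaves, a fully resolved tree on $X$ belongs to $C_{T_s,n,\ell}$ if and only if deleting the leaves of $L$ and suppressing the resulting degree-two vertices returns $T_s$; I write $T|_V$ for this restriction, so the condition is $T|_V=T_s$. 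Second, the Splits-Equivalence Theorem of \citep{semple2003phylogenetics}: a set of splits of $X$ is displayed by a single tree on $X$ if and only if the splits are pairwise compatible.

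I would then proceed in four steps. (i) Reading off the three-case analysis in the proof of Theorem~\ref{thm:dimensionofthespace}, every vertex split $\sigma=A|B$ has the feature that its restriction $\sigma|_V=(A\cap V)|(B\cap V)$ is either a trivial split of $V$ (cases 1 and 2) or a nontrivial split of $T_s$ (case 3). (ii) Given compatible vertex splits $\sigma_1,\sigma_2$, apply Splits-Equivalence to the pair $\{\sigma_1,\sigma_2\}$ to obtain a tree $T^{\ast}$ on $X$ displaying both. (iii) Check that $T_s$ refines $T^{\ast}|_V$: the nontrivial splits of $T^{\ast}|_V$ are among $\sigma_1|_V$ and $\sigma_2|_V$, each of which is trivial or a split of $T_s$ by step (i), so every split of $T^{\ast}|_V$ is a split of $T_s$. (iv) Lift this refinement to $X$: produce a tree $T$ refining $T^{\ast}$ with $T|_V=T_s$, and then resolve the remaining polytomies of $T$ --- which involve only leaves of $L$ together with already-resolved directions of $T_s$ --- into a fully resolved tree $T'$ chosen so that $T'|_V=T_s$ is preserved. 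By the characterization in the first paragraph $T'\in C_{T_s,n,\ell}$, and since $T'$ refines $T^{\ast}$ it displays both $\sigma_1$ and $\sigma_2$; hence $v_1v_2\in E$ and the vertices are adjacent.

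The hard part will be step (iv), the lift, and this is where the compatibility hypothesis is genuinely needed. A tempting shortcut is to take any witness tree $T_1\in C_{T_s,n,\ell}$ that already displays $\sigma_1$ and argue that $\sigma_2$ is compatible with every split of $T_1$, so that $\Sigma(T_1)\cup\{\sigma_2\}$ is pairwise compatible and displayable; but this is false, because a witness tree commits the new leaves to particular positions that may clash with $\sigma_2$ even when $\sigma_1$ and $\sigma_2$ are compatible. For instance, in $G_{T_s,4,1}$ with $T_s$ inducing the split $12|34$, the compatible pair $\sigma_1=12|345$ and $\sigma_2=35|124$ is co-displayed by the cluster tree obtained by attaching leaf $5$ beside leaf $3$, yet the alternative witness for $\sigma_1$ that attaches leaf $5$ to the internal edge of $T_s$ contains the split $125|34$, which is incompatible with $\sigma_2$. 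This is precisely why I build the witness from the two splits first and only afterward reintroduce the full resolution of $T_s$ by refinement. Making step (iv) rigorous therefore amounts to verifying that the missing internal edges of $T_s$ can always be extended over $L$ compatibly with both $\sigma_1$ and $\sigma_2$ --- a refinement/extension fact that follows from the split-encoding results in \citep{semple2003phylogenetics}, but whose application must be checked to leave the placements realizing $\sigma_1$ and $\sigma_2$ intact.
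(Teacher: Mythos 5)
Your overall strategy is viable and genuinely different from the paper's, and your diagnosis of the pitfall (that a witness tree for $\sigma_1$ alone may commit the new leaves incompatibly with $\sigma_2$ --- your $125|34$ example is correct) is exactly right. But the proof as written has a real gap: step (iv) is the entire content of the lemma, and you do not prove it. Everything up through step (iii) is essentially bookkeeping --- the Splits--Equivalence Theorem hands you a tree $T^{\ast}$ on $X$ displaying $\{\sigma_1,\sigma_2\}$ for free, and checking that $T_s$ refines $T^{\ast}|_V$ is immediate from the case analysis. What remains is the assertion that a tree $T^{\ast}$ on $X$ whose $V$-restriction is refined by $T_s$ admits a common refinement $T'$ on $X$ with $T'|_V = T_s$. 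That is not a quotable ``split-encoding result'' in \citep{semple2003phylogenetics}; the obstruction is that each missing split $Z|\overline Z$ of $T_s$ must be extended to a split of $X$ by assigning the leaves of $L$ to one side or the other, and you must show such assignments can be made compatibly with $\Sigma(T^{\ast})$ and with each other, and that the final polytomy resolution introduces no new nontrivial $V$-splits. This is provable (e.g., realize the splits of $T_s$ one at a time: locate the vertex $w$ of $T^{\ast}$ at which $Z|\overline Z$ ``fits,'' partition the components of $T^{\ast}-w$ by which of $Z$, $\overline Z$ their $V$-leaves meet, placing $V$-free components arbitrarily, and pull the two groups apart along a new edge), but as submitted the hard step is deferred to an unnamed citation, which is not a proof.

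For comparison, the paper's argument is a hands-on construction that does precisely the work your step (iv) postpones, organized as an induction on the new leaves rather than as a refinement of a two-split tree. It strips the leaves of $V_{new}$ off both splits one at a time, noting that compatibility is preserved at each stage, until it reaches a pair of splits on $V$ that $T_s$ already displays; it then re-attaches each removed leaf $r$ in turn, using compatibility of the pair at level $n+i+1$ to deduce that $X_{e_1}\cap X_{e_2}\neq\varnothing$ and attaching $r$ adjacent to a common leaf $u$ in that intersection, so that $r$ lands on the correct side of both splits simultaneously. The two routes are morally equivalent --- both must solve the same placement problem for the new leaves --- but the paper's leaf-by-leaf induction makes the placement explicit and verifiable at each step, whereas your version concentrates all of that difficulty into a single unproved lifting claim. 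If you supply a rigorous proof of step (iv), your argument goes through and is arguably cleaner conceptually; without it, the proposal is incomplete.
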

\begin{proof}
For any two vertices, say $v_{1}, v_{2}$, in the $\BHV$ Connection Graph $G_{T_{s},n,\ell}$ (we will refer to this graph as $G_{C}$ for the remainder of this proof) that represent two compatible splits, to prove that they are connected in $G_{C}$ is equivalent to proving that there exists a tree in the corresponding $\BHV$ Connection Cluster that contains both of the splits that $v_{1}$ and $v_2$ represent. Denote the new leaf set in $G_{C}$ as $V_{new}, |V_{new}| = \ell$. Removing a common leaf from two compatible splits will result in two new compatible splits. 

We initialize a list for pairs of splits. $v_{1}$ and $v_{2}$ are splits on $n+\ell$ leaves. Denote $P_{n+\ell} = (v_{1}, v_{2})$. Let $P_{n+\ell}$ be the beginning of the list. The vertices $v_{1}$ and $v_{2}$ both contain all leaves in $V_{new}$. We generate the two compatible splits in $P_{n+\ell-1}$ by removing a leaf in $V_{new}$ from $v_{1}$ and $v_{2}$. In fact, we can generate the two compatible splits in $P_{n+i}$ from $P_{n+i+1}$ for all $i \geq 0$ by removing one leaf that is in $V_{new}$ and also in the two splits in $P_{n+i+1}$ from the two splits in $P_{n+i+1}$. So we will have a list $\{P_{n+\ell}, P_{n+\ell-1}, \ldots, P_{n+1}, P_{n}\}$, where $P_{n+\ell}$ is the beginning of the list and $P_{n}$ is the end of the list. Notice that the two compatible splits in $P_{n}$ are splits on the leaf set of $T_{s}$, and the splits in $P_{n+i+1}$ will contain one more leaf than the splits in $P_{n+i}$. That leaf is in $V_{new}$.   To prove that there exists a tree that is formed by adding the $\ell$ new leaves to the start tree $T_{s}$ \emph{and} contains both of the splits in $P_{n+\ell}$, we proceed by induction on $P_{n+i}$.  When $i = 0$, $T_{s}$ contains both of the splits in $P_{n}$. 

As in our inductive hypothesis, assume there is a tree $T_{i}$ that contains both of the splits in $P_{n+i}$ and the extra leaf in the splits in $P_{n+i+1}$ is denoted as $r$. If the two splits in $P_{n+i}$ are $X_{e_{1}}|\overline X_{e_{1}}$ and $X_{e_{2}}|\overline X_{e_{2}}$, then without loss of generality, the two splits in $P_{n+i+1}$ will be $X_{e_{1}}\cup \{r\}|\overline X_{e_{1}}$ and $X_{e_{2}}\cup\{r\}|\overline X_{e_{2}}$. By the  construction of all elements in the list, we know the two splits in $P_{n+i+1}$ are compatible. So one and only one of $$(X_{e_{1}}\cup \{r\})\cap (X_{e_{2}}\cup\{r\}), (X_{e_{1}}\cup \{r\})\cap (\overline X_{e_{2}}),  (\overline X_{e_{1}})\cap (X_{e_{2}}\cup\{r\}), (\overline X_{e_{1}})\cap (\overline X_{e_{2}})$$ has to be an empty set. Since  $r\in (X_{e_{1}}\cup \{r\})\cap (X_{e_{2}}\cup\{r\})$, one of $$(X_{e_{1}}\cup \{r\})\cap (\overline X_{e_{2}}) = (X_{e_{1}})\cap (\overline X_{e_{2}}),  (\overline X_{e_{1}})\cap (X_{e_{2}}\cup\{r\})=(\overline X_{e_{1}})\cap (X_{e_{2}}), (\overline X_{e_{1}})\cap (\overline X_{e_{2}})$$ has to be an empty set. On the other hand, the two splits in $P_{n+i}$ are also compatible. 

So one and only one of $$(X_{e_{1}})\cap (X_{e_{2}}), (X_{e_{1}})\cap (\overline X_{e_{2}}), (\overline X_{e_{1}})\cap (X_{e_{2}}), (\overline X_{e_{1}})\cap (\overline X_{e_{2}})$$ is non-empty. Thus $(X_{e_{1}})\cap (X_{e_{2}})$ is not an empty set. Denote one of the common leaves in $(X_{e_{1}})$ and $(X_{e_{2}})$ as $u$, where $u\in T_{i}$. If we choose any internal edge $e$ in $T_{i}$ connected to the parent of $u$,  we can add a leaf edge to $e$ with the leaf vertex as $r$, and the new tree (denoted $T_{i+1}$) that we obtain will contain both splits in $P_{n+i+1}$. Therefore, there exists a tree in the $\BHV$ Connection Cluster that contains both of the splits represented by $v_{1}$ and $v_{2}$.
\end{proof}

\begin{cor}\label{cor:compatibleorthantingraph}
Let $G$ be a $\BHV$ Connection Graph.  For any $V \subseteq V(G)$, if $V$ represents a set of compatible splits, then $V$ induces a complete subgraph of $G$, and thus the set of compatible splits form an orthant of dimension $|V|$ in the corresponding $\BHV$ Connection Space $S_{T_{s},n,\ell}$.
\end{cor}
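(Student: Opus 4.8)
The plan is to separate the two assertions. That $V$ induces a complete subgraph is an immediate consequence of Lemma~\ref{lem:compatiblesplitsingraph}: a set of compatible splits is in particular pairwise compatible, so for any two vertices $v_i,v_j\in V$ the associated splits are compatible, and the lemma supplies a tree in $C_{T_{s},n,\ell}$ containing both, i.e.\ an edge $v_iv_j$ of $G$. Since this holds for every pair, $V$ spans a clique. The substantive part of the corollary is the second claim, namely that these $|V|$ pairwise-compatible splits occur \emph{together} in a single tree of the Connection Cluster; only then is the face they span a genuine $|V|$-dimensional orthant of $S_{T_{s},n,\ell}$, rather than merely a collection of pairwise-realizable axes.

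For the simultaneous realization I would invoke the Splits Equivalence Theorem of \citep{semple2003phylogenetics}: a collection of pairwise compatible splits on a finite leaf set is realized by a unique, not necessarily binary, tree. Applying this to $V$, whose splits all live on the full leaf set $X$ of size $n+\ell$, produces a tree $T_V$ on $X$ whose nontrivial split set is exactly $V$. It then remains to show that $T_V$ is a contraction of some \emph{binary} tree in $C_{T_{s},n,\ell}$, for then $V$ is contained in that tree's split set and the corresponding face of its maximal orthant is the sought orthant of dimension $|V|$.

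To connect $T_V$ back to the start tree I would use the three-case description of cluster splits from the proof of Theorem~\ref{thm:dimensionofthespace}. Writing $W$ for the leaf set of $T_s$, every split in $V$ restricts on $W$ to either a trivial or empty split (cases (1) and (2)) or to a nontrivial split of $T_s$ itself (case (3)); in particular no split of $V$ restricts to a split of $W$ incompatible with $T_s$. Hence the restriction $T_V|_{W}$ is a contraction of $T_s$, i.e.\ $T_s$ refines $T_V|_{W}$. I would then build a cluster tree $T'$ by refining the $W$-structure of $T_V$ so that it agrees with $T_s$ (introducing exactly the splits of $T_s$ while leaving every split of $V$ intact) and resolving any remaining polytomies arbitrarily; by construction $T'$ is obtained from $T_s$ by attaching the $\ell$ new leaves at the positions prescribed by $T_V$, so $T'\in C_{T_{s},n,\ell}$ and $V$ lies in its split set.

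The main obstacle is precisely the refinement-restriction step above: one must verify that resolving $T_V$ to agree with $T_s$ on $W$ neither destroys nor conflicts with any split of $V$, and that the resulting tree is genuinely reachable from $T_s$ by leaf additions rather than by some unrelated topology change. This is where the case analysis underlying Theorem~\ref{thm:dimensionofthespace} carries the weight, since it pins down exactly how each cluster split interacts with the splits of $T_s$. An alternative that avoids the Splits Equivalence Theorem is to generalize the leaf-removal and leaf-addition induction of Lemma~\ref{lem:compatiblesplitsingraph} from two splits to the whole set $V$ at once; there the analogous obstacle is to show that at each add-back step a single placement of the reinstated new leaf can be chosen compatibly with \emph{every} split in $V$ simultaneously, which should follow because each split unambiguously dictates the side of the new leaf and pairwise compatibility guarantees that these prescriptions admit a common attachment location.
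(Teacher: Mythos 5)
Your route differs from the paper's in a substantive way. The paper proves the corollary by induction on $|V|$: the inductive step adjoins one vertex $v$ to a clique of size $n$ and uses Lemma~\ref{lem:compatiblesplitsingraph} to connect $v$ to each remaining vertex, which is logically the same as your direct pairwise application of the lemma; it then simply asserts, citing Definition~\ref{def:BHVconnectiongraph}, that the corresponding orthant exists in $S_{T_{s},n,\ell}$. You correctly isolate that this last assertion is the real content of the corollary: a face of dimension $|V|$ lies in the Connection Space only if all $|V|$ splits co-occur in the split set of a single tree of the cluster, and pairwise realizability (a clique) does not by itself give simultaneous realizability. Your plan for closing that gap --- apply the Splits Equivalence Theorem to obtain a tree $T_V$ displaying $V$, use the case analysis from Theorem~\ref{thm:dimensionofthespace} to see that every split of $V$ restricts on the old leaf set $W$ to an empty, trivial, or $T_{s}$-split, so that $T_{s}$ refines $T_V|_W$, and then refine $T_V$ to a binary cluster tree containing $V$ --- is sound, and the obstacle you flag (that the refinement can be performed without disturbing $V$) is genuine but surmountable: each polytomy of $T_V$ projects to a polytomy of $T_V|_W$ that $T_{s}$ resolves, and the new-leaf subtrees hanging at such a polytomy are unconstrained by the remaining $W$-splits, so they may be attached to any of the resolved edges; your alternative of extending the lemma's leaf-removal induction from two splits to all of $V$ would also work. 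In short, your argument is more careful than the paper's precisely where the paper's proof is thinnest; what the paper's induction buys is brevity, at the cost of leaving the simultaneous-realizability step unjustified.
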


\begin{proof}
We proceed by induction.  If $|V| = 1$, then the only vertex in the set is a complete graph in $G$, and it corresponds to an axis in $S_{T_{s},n,\ell}$ by Definition \ref{def:BHVconnectionspace}.  As in our inductive hypothesis, we assume that if $V \subseteq V(G)$ of size $n$ represents a set of compatible splits, then $V$ induces a complete subgraph of $G$, and the set of compatible splits form an orthant of dimension $n$ in $S_{T_{s},n,\ell}$.  For any $U \subseteq V(G)$ of size $n+1$ that represents a set of compatible splits, choose any $n$ vertices from $U$; they will induce a complete subgraph of $G$. The one remaining vertex, $v$, represents a split that is compatible with splits represented by $U \setminus \{v\}$. By Lemma~\ref{lem:compatiblesplitsingraph}, $v$ and any vertex from $U \setminus \{v\}$ is connected to all vertices in the subgraph with vertices in the set $U \setminus \{v\}$. Thus, $U$ induces a complete subgraph in $G$. By Definition \ref{def:BHVconnectiongraph}, the orthants formed by the splits represented by $U$ exist in $S_{T_{s},n,\ell}$.

\end{proof}
\begin{lem}\label{lem:splitsofgeodesic}For all $ n\geq 3$, the trees along the geodesic between any two trees $T_{1}, T_{2}$ in $BHV_{n}$ will only contain some split if that split is contained in either of the two trees.
\end{lem}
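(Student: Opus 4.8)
The plan is to prove the contrapositive: fix a split $S$ that is contained in neither $T_1$ nor $T_2$, and show that no tree on the geodesic $\gamma$ from $T_1$ to $T_2$ can contain $S$. For a point $T \in \BHV_n$, let $w_S(T)$ denote the weight of the edge inducing split $S$ in $T$, with the convention $w_S(T)=0$ when $S$ is not a split of $T$. Write $\mathrm{Comp}(S)$ for the set of all trees \emph{compatible} with $S$, i.e.\ all $T$ such that $S$ together with the splits of $T$ forms a compatible split system. The engine of the proof is the standard structural fact that cutting a tree along the edge for $S = A\,|\,B$ identifies $\mathrm{Comp}(S)$ isometrically with the product $[0,\infty)\times \BHV_{|A|+1}\times \BHV_{|B|+1}$, where the first coordinate is exactly $w_S$ (the length of the cut edge) and the two $\BHV$ factors record the shapes on either side of $S$; the metric respects this splitting because the coordinates on the two sides vary independently within every orthant containing $S$.

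From this product description I would extract two facts. First, $\mathrm{Comp}(S)$ is geodesically convex in $\BHV_n$: the identification above is an isometric embedding of a CAT(0) product, so for any two points of $\mathrm{Comp}(S)$ the image of the product geodesic is a path of minimal length between them in $\BHV_n$, and by uniqueness of geodesics in the CAT(0) space $\BHV_n$ it \emph{is} the $\BHV_n$-geodesic; hence that geodesic stays in $\mathrm{Comp}(S)$. Second, along any geodesic lying inside $\mathrm{Comp}(S)$ the function $w_S$ is affine, since it is the projection onto the Euclidean ray factor $[0,\infty)$ and geodesics in a product project to geodesics in each factor. I would also record the elementary consequence that, because $\mathrm{Comp}(S)$ is convex and $\BHV_n$ is uniquely geodesic, the set $\{\,t : \gamma(t)\in \mathrm{Comp}(S)\,\}$ is a closed subinterval $[a,b]$: any subsegment of $\gamma$ between two points of a convex set is itself the geodesic between those points and therefore lies in the set.

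With these in hand the argument closes quickly. Suppose toward a contradiction that some $\gamma(t^*)$ contains $S$, so $w_S(\gamma(t^*))>0$ and $t^*\in[a,b]$. At the left endpoint, either $a=0$, in which case $\gamma(a)=T_1$ lies in $\mathrm{Comp}(S)$ but does not contain $S$, forcing $w_S(\gamma(a))=0$; or $a>0$, in which case $\gamma(a)$ is a boundary point of $\mathrm{Comp}(S)$, and since $\{w_S>0\}$ is open in $\BHV_n$ (a present, positively weighted split cannot disappear or become incompatible under small perturbations) we again get $w_S(\gamma(a))=0$. The same dichotomy at the right endpoint gives $w_S(\gamma(b))=0$. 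But $\gamma|_{[a,b]}$ is a geodesic contained in $\mathrm{Comp}(S)$, so $w_S$ is affine on $[a,b]$; being nonnegative and vanishing at both endpoints, it must vanish identically, contradicting $w_S(\gamma(t^*))>0$. Hence no tree on $\gamma$ contains $S$, which is the claim.

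The main obstacle is the structural input in the first paragraph: one must verify that $\mathrm{Comp}(S)$ genuinely carries the product metric $[0,\infty)\times \BHV_{|A|+1}\times\BHV_{|B|+1}$ with $w_S$ as the ray coordinate, rather than merely being set-theoretically a product of orthant complexes. This amounts to checking that the orthant metric of $\BHV_n$ factors correctly across the cut at $S$ and that the gluing of orthants sharing $S$ is compatible with the product---routine but essential, and it is exactly the reduction to ``no common edges'' used in the Owen--Provan geodesic algorithm. As an alternative that sidesteps this, one could instead invoke the explicit geodesic description from \citep{owen2011fast}: the geodesic is supported on a sequence of orthants whose topologies consist of the shared splits of $T_1,T_2$ together with subsets of the remaining splits of $T_1$ and of $T_2$, so every split met along $\gamma$ already belongs to $T_1$ or $T_2$. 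I prefer the convexity route because it isolates precisely the geometric reason the statement holds.
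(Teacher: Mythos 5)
The paper's entire proof of this lemma is a single sentence: it cites Proposition 4.1 of \citep{billera2001geometry}, which states that the geodesic traverses a sequence of orthants whose split sets are contained in $\Sigma(T_1)\cup\Sigma(T_2)$. That is exactly the ``alternative'' you mention in your closing paragraph, so your main argument is a genuinely different and essentially self-contained route: you localize to a single offending split $S$, claim the locus $\mathrm{Comp}(S)$ of trees compatible with $S$ is convex and carries a product metric with $w_S$ as a Euclidean ray coordinate, and then kill $w_S$ along the geodesic by affineness plus vanishing boundary values. This buys a cleaner isolation of the geometric mechanism (one split at a time, with no need for the full orthant-sequence description of the geodesic), at the cost of having to establish the convexity/product structure yourself. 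Your openness argument for $\{w_S>0\}$ is fine once one notes that $w_S$ is $1$-Lipschitz, being a coordinate function or identically zero on each closed orthant, consistently on shared faces.

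The one genuine gap is the step you yourself flag: the assertion that $\mathrm{Comp}(S)\cong[0,\infty)\times\BHV_{|A|+1}\times\BHV_{|B|+1}$ is an \emph{isometric embedding} into $\BHV_n$. The justification you give --- that the coordinates on the two sides of $S$ vary independently within every orthant containing $S$ --- only shows that the product metric agrees with the \emph{induced length metric} on $\mathrm{Comp}(S)$; it does not exclude a strictly shorter path between two points of $\mathrm{Comp}(S)$ that detours through orthants incompatible with $S$. Ruling that out is precisely the convexity you need, and it is morally the content of the lemma itself, so as written the crux is asserted rather than proved. The standard repair: define $r:\BHV_n\to\mathrm{Comp}(S)$ by setting to zero the weight of every split incompatible with $S$. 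On each closed orthant $r$ is an orthogonal projection onto a face, hence $1$-Lipschitz, and the definitions agree on shared faces, so $r$ is a continuous $1$-Lipschitz retraction onto $\mathrm{Comp}(S)$. The image under $r$ of the $\BHV_n$-geodesic between two points of $\mathrm{Comp}(S)$ is then a path in $\mathrm{Comp}(S)$ of no greater length, hence a geodesic, hence \emph{the} geodesic by uniqueness in a CAT(0) space; convexity follows and the rest of your argument closes. With that lemma supplied (or with a citation to the corresponding convexity statements in \citep{billera2001geometry}), your proof is correct.
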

\begin{proof}
By \citep{billera2001geometry} Proposition 4.1, the geodesic between $T_{1}$ and $T_{2}$ traverses a sequence of orthants whose split set is a subset of the union of the split set of $T_{1}$ and that of $T_{2}$. 
\end{proof}

\begin{thm}\label{thm:subspace} The $\BHV$ Connection Space $S_{T_{s},n,\ell}$ is a convex space. In particular, the geodesic between any two trees $T$ and $T'$ in $S_{T_{s},n,\ell}$ lies within $S_{T_{s},n,\ell}$.
\end{thm}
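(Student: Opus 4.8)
The plan is to show that every point along the geodesic between $T$ and $T'$ corresponds to a tree whose topology is already represented by an orthant of $S_{T_{s},n,\ell}$, so that the geodesic never leaves the space. I would combine the containment result of Lemma~\ref{lem:splitsofgeodesic}, which restricts which splits can appear along the geodesic, with the completeness result of Corollary~\ref{cor:compatibleorthantingraph}, which guarantees that any compatible set of splits drawn from the connection graph is realized by an orthant of the space.

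First I would fix two trees $T, T' \in S_{T_{s},n,\ell}$ and record that, because each lies in some orthant of the connection space, every nontrivial split of $T$ and of $T'$ is an axis of that orthant and hence a vertex of the connection graph $G_{T_{s},n,\ell}$; that is, $\Sigma(T) \cup \Sigma(T') \subseteq V_{C}$ by Definitions~\ref{def:BHVconnectionspace} and~\ref{def:BHVconnectiongraph}. Next I would pass to the geodesic $\gamma$ joining $T$ and $T'$ in $\BHV_{n+\ell}$ and invoke Lemma~\ref{lem:splitsofgeodesic}: every tree $T_{p}$ occurring at a point $p$ of $\gamma$ has split set $\Sigma(T_{p}) \subseteq \Sigma(T) \cup \Sigma(T') \subseteq V_{C}$.

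The heart of the argument is then to place each such $T_{p}$ inside the space. Since $T_{p}$ is an actual phylogenetic tree, its splits $\Sigma(T_{p})$ are pairwise compatible, and by the previous paragraph they all belong to $V_{C}$. Corollary~\ref{cor:compatibleorthantingraph} therefore applies and shows that $\Sigma(T_{p})$ spans an orthant of dimension $|\Sigma(T_{p})|$ in $S_{T_{s},n,\ell}$; the tree $T_{p}$, being the point of that orthant whose coordinates are the split weights it carries, lies in $S_{T_{s},n,\ell}$. As $p$ ranges over all of $\gamma$ this gives $\gamma \subseteq S_{T_{s},n,\ell}$, and since the metric on the connection space is exactly the restriction of the geodesic metric of $\BHV_{n+\ell}$ (Definition~\ref{def:BHVconnectionspace}), $\gamma$ is also the geodesic within the connection space, so convexity follows.

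The step I expect to require the most care is the passage from \emph{split set contained in $V_{C}$} to \emph{the tree itself lies in an orthant of $S_{T_{s},n,\ell}$}. Containment of splits is only a statement about one-dimensional faces; what upgrades it to containment of the full tree is precisely that Corollary~\ref{cor:compatibleorthantingraph} shows the connection space is closed under taking arbitrary compatible subsets of its axis set, so no intermediate topology can fall outside the space. I would make this explicit for the partially collapsed, polytomous trees that appear where $\gamma$ crosses lower-dimensional orthants: these are handled by the $|V| < n+\ell-3$ instances of the corollary, ensuring the argument covers the orthant boundaries the geodesic may traverse and not merely the top-dimensional interiors.
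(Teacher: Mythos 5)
Your proof is correct and follows essentially the same route as the paper's: restrict the splits along the geodesic via Lemma~\ref{lem:splitsofgeodesic}, observe they are pairwise compatible vertices of the connection graph, and conclude the corresponding orthant (hence each intermediate tree) lies in $S_{T_{s},n,\ell}$. If anything, your citation of Corollary~\ref{cor:compatibleorthantingraph} for the orthant-existence step is the more precise reference (the paper cites Lemma~\ref{lem:compatiblesplitsingraph} directly), and your explicit treatment of the lower-dimensional boundary orthants is a welcome clarification.
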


\begin{proof}
Denote the split sets of $T$ and $T'$ as $\Sigma$ and $\Sigma '$. By Lemma~\ref{lem:splitsofgeodesic}, for any tree $T_{g}$ on the geodesic, $T_{g}$ only contain splits in $\Sigma \cup \Sigma '$. Thus the split set of $T_{g}$, $\Sigma_{g}$ is a subset of the vertex of $\BHV$ Connection Graph $G_{T_{s}, n, \ell}$. On the other hand, since splits in $\Sigma_{g}$ exist in a tree, they are compatible with each other. By Lemma~ \ref{lem:compatiblesplitsingraph}, the orthants formed by $\Sigma_{g}$ exist in $S_{T_{s},n,\ell}$ and thus $T_{g}$ exists in $S_{T_{s},n,\ell}$. Since $T_{g}$ is chosen arbitrarily along the geodesic, the geodesic is contained in $S_{T_{s},n,\ell}$ (as shown in \citep{owen2011fast} regarding the properties of the geodesic distance in any $\BHV$ space).
\end{proof}
\begin{figure}[ht!]
\includegraphics[scale=.2]{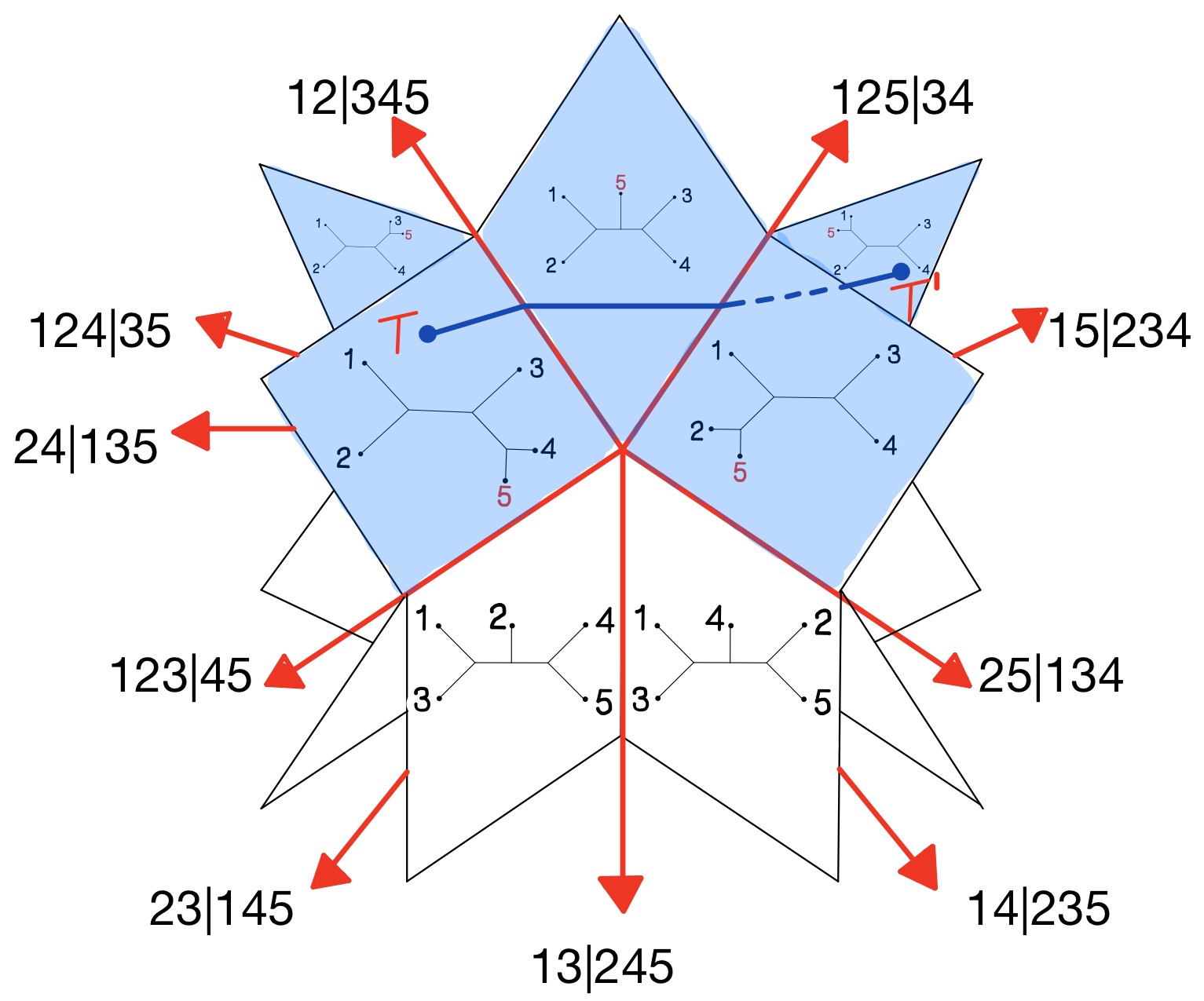}
\centering
\caption{An example of the geodesic path in $\BHV$ Connection Space (colored blue). The $\BHV$ Connection Space $S_{T_{s}, 4,1}$ is shown as part of $\BHV_{5}$}
\label{fig:BHVGraph}
\centering
\end{figure}

\subsection{$\BHV$ Connection Graphs with Connection Step 1}

For the rest of this section, we focus on $\BHV$ Connection Graphs with connection step 1.  We show that we can build $\BHV$ Connection Graphs for trees with complicated shapes from simpler trees.  

\begin{definition}
A \textit{caterpillar} is a an unrooted tree that can be represented in the plane by a graph where all the leaves have exactly one edge to a single line.
\end{definition}

\begin{figure}[ht!]
\includegraphics[scale=.10]{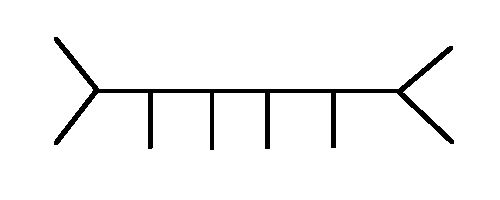}
\centering
\caption{A caterpillar with 8 leaves.}
\label{fig:ruthcaterpillar}
\centering
\end{figure}

Equivalently, caterpillars are trees where every vertex of degree at least three has at most two non-leaf neighbors. Figure \ref{fig:ruthcaterpillar} shows a caterpillar with 8 leaves. Binary caterpillars are caterpillars in which all non-leaf vertices have degree three.  Due to the simplicity of this tree shape, we are able to easily construct $\BHV$ Connection Graphs with connection step $1$ for binary caterpillars.  

We also observe that such $\BHV$ Connection Graphs for all binary trees can be constructed using binary caterpillar trees.  The concatenation method proposed in Theorem~\ref{thm:subgraph} provides a general idea of the structure of the $\BHV$ Connection Graph for trees with more complex topologies and arbitrary numbers of leaves.

\begin{thm}\label{thm:independentsetinGwhenk=1} The size of a largest independent set in the $\BHV$ Connection Graph with step 1, $G_{T_{s},n,1}$, is $n$.  Furthermore, the set of vertices representing splits introduced by adding the new leaf to a leaf edge is the only independent set of $G_{T_{s},n,1}$ of size $n$.
\end{thm}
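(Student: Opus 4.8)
The plan is to recast independent sets combinatorially and then reduce the whole question to a statement about disjoint subsets of the original leaf set. Write $V$ for the $n$-leaf set of $T_s$ and $x$ for the single new leaf (so $\ell=1$). By Lemma~\ref{lem:compatiblesplitsingraph} two compatible splits are always adjacent in $G_{T_s,n,1}$, and any two splits occurring in a common tree are automatically compatible; hence two vertices are adjacent if and only if their splits are compatible, and an independent set is exactly a set of pairwise \emph{incompatible} splits. First I would list the vertices explicitly using the three cases in the proof of Theorem~\ref{thm:dimensionofthespace} specialized to $\ell=1$: case (1) contributes nothing, case (2) yields the $n$ ``leaf-edge'' splits $\{v,x\}\mid(V\setminus\{v\})$, one per $v\in V$, and case (3) yields the $2(n-3)$ ``internal'' splits obtained by placing $x$ on either side of a nontrivial split $Z_i\mid\overline Z_i$ of $T_s$.

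The central device is to record each split by the set $S\subseteq V$ of \emph{original} leaves lying on the same side as $x$. Under this bijection the leaf-edge splits become the singletons $\{v\}$, while the internal splits become the two sides $Z_i$ and $\overline Z_i$ of each nontrivial split of $T_s$; crucially, every recorded set $S$ is a side of an edge-split (trivial or nontrivial) of the single tree $T_s$. A direct computation of the four intersections then shows that two of our splits are incompatible precisely when $S_1\setminus S_2$, $S_2\setminus S_1$, and $V\setminus(S_1\cup S_2)$ are all nonempty, since the side-with-$x$ against side-with-$x$ intersection always contains $x$ and so can never certify compatibility.

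The key step, which I expect to carry the argument, is the following simplification. Because $S_1\mid\overline S_1$ and $S_2\mid\overline S_2$ are both edge-splits of the one tree $T_s$, they are pairwise compatible \emph{as splits of $V$}, so at least one of $S_1\cap S_2$, $S_1\setminus S_2$, $S_2\setminus S_1$, $V\setminus(S_1\cup S_2)$ is empty; incompatibility in $G_{T_s,n,1}$ forces the last three to be nonempty, so the empty one must be $S_1\cap S_2$. Thus within our family two splits are incompatible if and only if $S_1\cap S_2=\emptyset$ and $S_1\cup S_2\neq V$, and an independent set corresponds to a family of pairwise \emph{disjoint} nonempty subsets of $V$. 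Such a family has total size at most $|V|=n$, hence at most $n$ members, with equality only when every member is a singleton; since the available singletons are exactly the $n$ leaf-edge splits, this pins the unique independent set of size $n$. I would finish by verifying that these $n$ singletons really are pairwise incompatible (they are disjoint and any two have union of size $2<n$ as $n\ge 4$). The main obstacle is precisely the incompatibility-to-disjointness reduction: the observation that all recorded sets are sides of splits of the fixed tree $T_s$ is what excludes every near-covering configuration and forces the maximum to be realized by a disjoint family, hence by the leaf-edge splits alone.
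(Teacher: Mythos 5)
Your proof is correct, and it takes a genuinely different (and tidier) route than the paper's. The paper partitions the vertices into the $n$ leaf-edge splits $V_{\ell}$ and the $2(n-3)$ internal splits $V_{i}$, assumes for contradiction that a maximum independent set of size at least $n$ contains $d$ vertices of $V_{i}$, and then counts: each such vertex, recorded by the set $A_{i}$ of original leaves on the new leaf's side, is adjacent to exactly the $|A_{i}|\ge 2$ leaf-edge vertices indexed by $A_{i}$; the $A_{i}$ are pairwise disjoint (established by precisely your key observation that the $A_{i}$ are sides of splits of the single tree $T_{s}$, hence compatible downstairs, while the lifted splits are incompatible upstairs); so by inclusion--exclusion at least $d+1$ leaf-edge vertices are excluded and the independent set has size at most $n-1$, a contradiction. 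You instead apply the disjointness reduction uniformly to \emph{all} vertices, singletons included, which collapses the entire problem to the statement that a family of pairwise disjoint nonempty subsets of an $n$-set has at most $n$ members, with equality if and only if every member is a singleton. That buys a direct, non-contradiction argument with no case split between $V_{\ell}$ and $V_{i}$ and no inclusion--exclusion; what the two arguments share is the single load-bearing fact that every ``side containing the new leaf'' is a side of an edge-split of the fixed tree $T_{s}$, so compatibility of the recorded sets converts incompatibility of the lifted splits into disjointness (plus non-covering, which you only need to check for the singletons). One small point to make explicit in a write-up: non-adjacency in $G_{T_{s},n,1}$ is equivalent to incompatibility because splits lying in a common tree are compatible, and Lemma~\ref{lem:compatiblesplitsingraph} supplies the converse.
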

\begin{proof}
When adding a single leaf to a tree, we may either add it to a leaf edge or an internal edge. Suppose the leaf set of $T_{s}$ is $W$, and $|W|=n$. Adding a new leaf $v$ to an existing leaf edge in $T_{s}$ will introduce splits such as $\{u, v\}|\{W\setminus \{u, v\}\}$ where $u\in W$. Adding a new leaf $v$ to an internal edge will introduce splits such as $\{v, A\}|\{W\setminus A\}$ where $A\subset W, |A| >1, |W\setminus A| > 1$. 

By definition, each vertex in a $\BHV$ Connection Graph corresponds to a split introduced by adding new leaves. For the remainder of the proof we will refer to the vertex set of $G_{T_{s},n,1}$ as $V$. We partition $V$ into parts:  the vertices which represent splits introduced by adding the new leaf to a leaf edge, denoted $V_{\ell}$, and vertices which represent splits introduced by adding the new leaf to an internal edge, denoted $V_{i}$.  Here $V_{i} \cup V_{\ell} = V, V_{i} \cap V_{\ell} = \varnothing.$ Moreover, adding a new leaf to an internal edge will introduce two splits so $|V_{i}| = 2(n-3)=2n-6$ and $|V_{\ell}| = n.$

Let $v$ be the new leaf added to the start tree. For all $\{u, m\} \subset W$, we know $\{u, v\}|\{W\setminus \{u, v\}\}$ and $\{m, v\}|\{W\setminus \{m, v\}\}$ are not compatible. Therefore, $V_{\ell}$ is an independent set of size $n$ in $G_{T_{s},n,1}$. The following proves that this is also the unique largest independent set in $G_{T_{s},n,1}$.

Assume for contradiction that the largest independent set in $G_{T_{s},n,1}$ contains vertices from $V_{i}$ and has size larger than $n$.
Define the largest independent set in $G_{T_{s},n,1}$ as $V_{\alpha}$. $|V_\alpha| \geq n$.
Any two vertices in $V_{\alpha}$ are not connected.  By Lemma~\ref{lem:compatiblesplitsingraph}, they correspond to a pair of incompatible splits. Since the pair of splits introduced by adding a new leaf to an internal edge are compatible, only one of the splits will have a corresponding vertex inside $V_{\alpha}$. Thus, $V_{\alpha}$ will have at most $n-3$ vertices from $V_{i}$. Denote some independent set in the $V_{i}$ induced subgraph of $G_{T_{s},n,1}$ as $V_{\beta}$.  Note $|V_{\beta}| \leq n-3$.

Each vertex in $V_{\beta}$ represents a split in the form $v \cup A_{i} | W \setminus A_{i}$, where $A_{i}\in V, |A_{i}|>1$, and $|W\setminus A_{i}|>1$. The splits introduced by adding the new leaf to an existing leaf edge which are compatible with $(\{v\} \cup A_{i}) | (W \setminus A_i)$ have the form $\{v, u \}|(W \setminus  \{v, u\})$, where  $u\in A_{i}$. Thus, the number of splits introduced by adding the new leaf to a leaf edge that are also compatible with $(v \cup A_i) |(W \setminus A_i)$ is $|A_i|$. 

\begin{figure}[ht!]
 
\begin{subfigure}{0.52\textwidth}
\includegraphics[scale=.17]{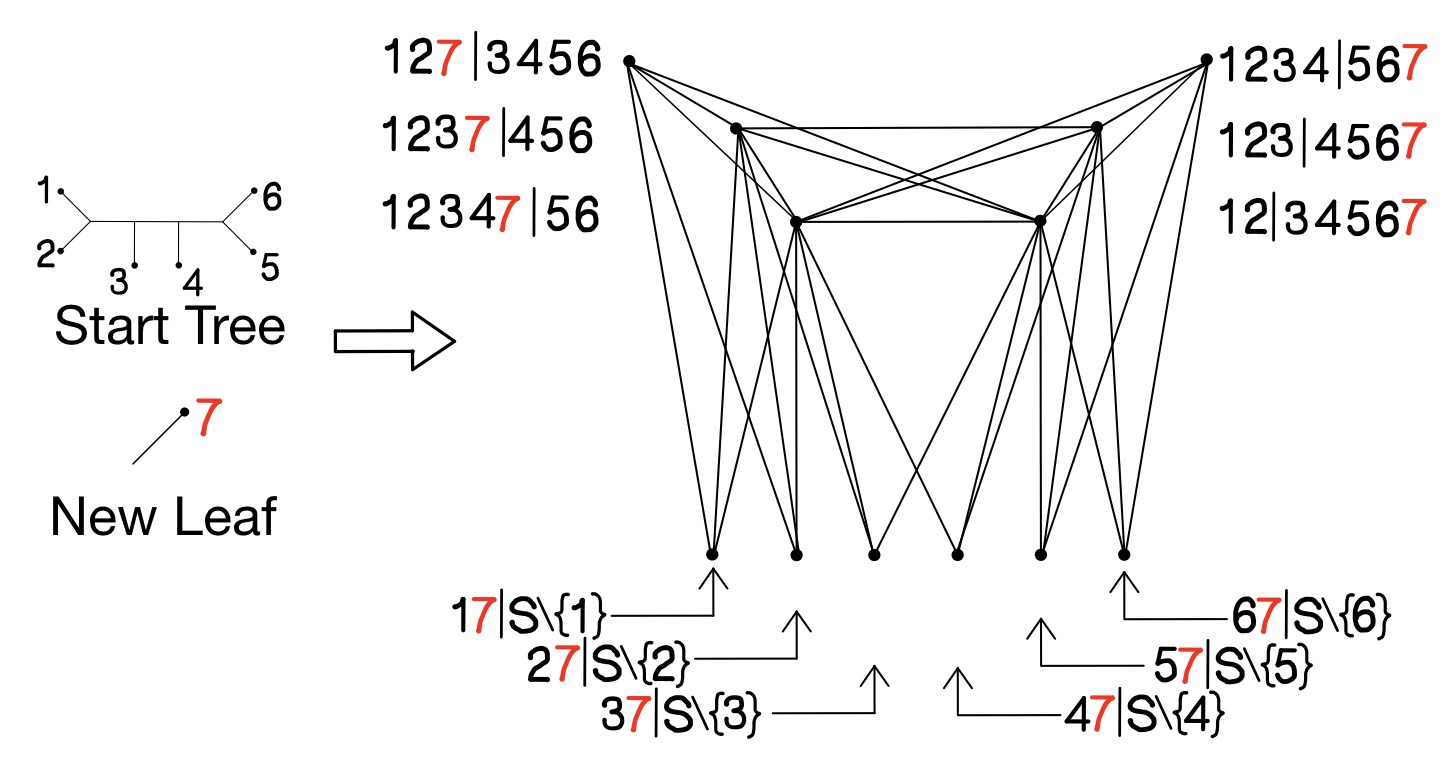}
\centering
\caption{}
\label{fig:BHVGCat6}
\end{subfigure}
\begin{subfigure}{0.5\textwidth}
\includegraphics[scale=.17]{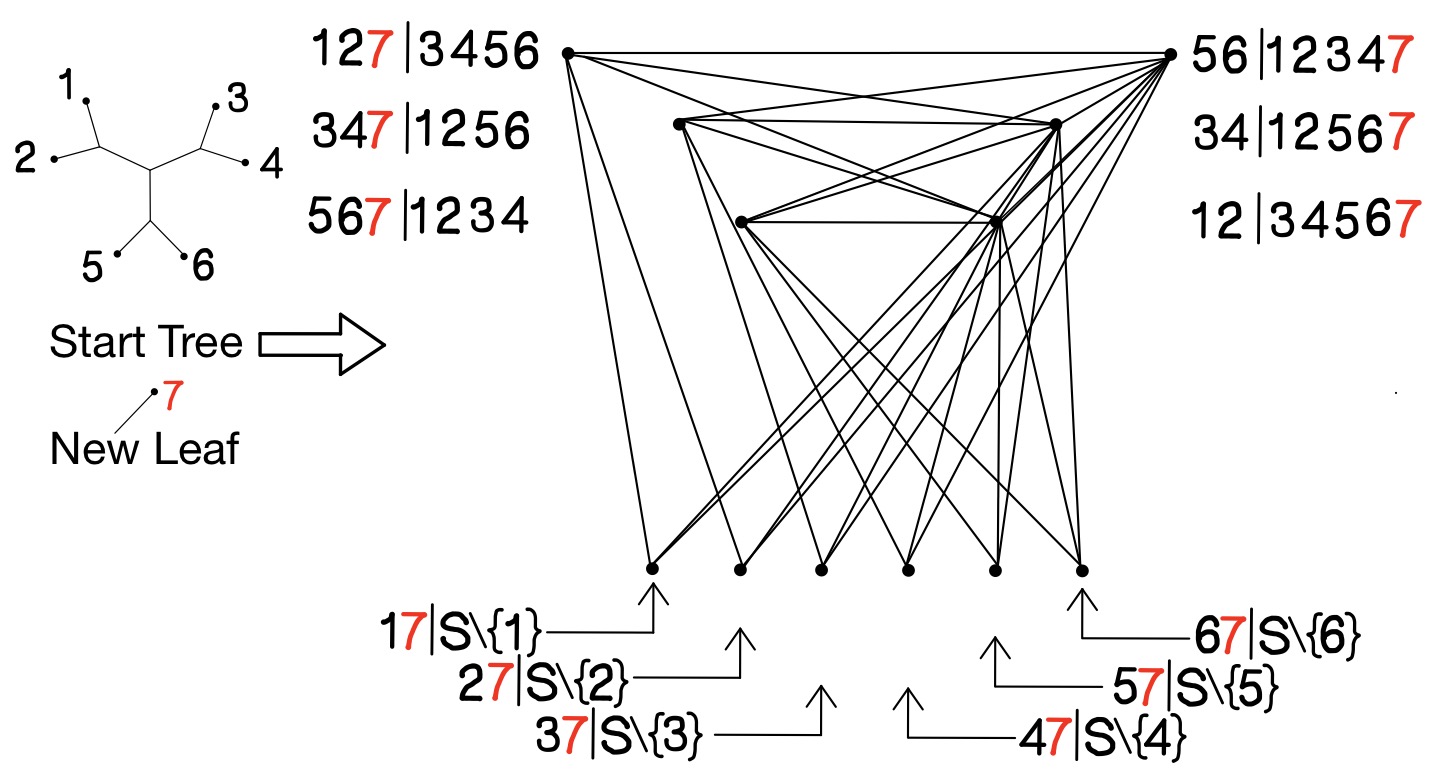}
\centering
\caption{}
\label{fig:BHVGFea6}
\end{subfigure}
\caption{The $\BHV$ Connection Graphs $G_{T_{(a)}, 6, 1}$ and $G_{T_{(b)}, 6, 1}$. In Figures (a) and (b), the vertex set of the start tree $T_{s}$ is $\{1, 2, 3, 4, 5, 6\}$. The bottom six vertices are $V_{\ell}$ in this graph, which also form an independent set. The start trees $T_{(a)}$ and $T_{(b)}$ have the same leaf set but different tree topologies. $G_{T_{(a)}, 6, 1}$ and $G_{T_{(b)}, 6, 1}$ are not isomorphic to each other as $G_{T_{(a)}, 6, 1}$ has vertex degree sequence $\{3, 3, 3, 3, 3, 3, 5, 5, 7, 7, 9, 9 \}$. $G_{T_{(b)}, 6, 1}$ has vertex degree sequence $\{3, 3, 3, 3, 3, 3, 5, 5, 5, 9, 9, 9\}$.}

\label{fig:BHVCG6}
\end{figure}

To calculate the total number of vertices in $V_{\ell}$ to which vertices in $V_{\beta}$ are adjacent, we only need to calculate the total number of splits introduced by adding the new leaf to a leaf edge that are also compatible with splits represented by vertices in $V_{\beta}$: $|\bigcup_{i=1}^{|V_{\beta}|} A_{i}|$. By the Inclusion - Exclusion principle, \[\mid\bigcup_{i=1}^{|V_{\beta}|} A_{i}| \geq \sum _{i=1}^{|S_{\beta}|} |A_{i}| - \sum _{1\leq i < j \leq |S_{\beta}|} |A_{i}\cap A_{j}|.\]

If $v \cup A_{i} | W \setminus A_{i}$ are leaves with trivial splits incompatible with those corresponding to $v \cup A_{j} | W \setminus A_{j}$ (which is the case for any two splits in $V_{\beta}$) then as in \citep{owen2011fast}: $$(v \cup A_{i}) \cap (W \setminus A_{j}) \neq \varnothing,$$
$$(W \setminus A_{i}) \cap (v \cup A_{j}) \neq \varnothing,$$
$$(W \setminus A_{i}) \cap (W \setminus A_{j}) \neq \varnothing.$$
Since $v \not \in W,$ $$(A_{i}) \cap (W \setminus A_{j}) \neq \varnothing,$$
$$(W \setminus A_{i}) \cap ( A_{j}) \neq \varnothing,$$
$$(W \setminus A_{i}) \cap (W \setminus A_{j}) \neq \varnothing.$$ But $A_{i} | (W \setminus A_i)$ and $A_{j} | (W \setminus A_{j})$ correspond to trivial splits in the start tree $T_{s}$ and are thus compatible with each other. Therefore, $A_{i} \cap A_{j}= \varnothing, |A_i\cap A_j| = 0$. So we have

$$\mid\bigcup_{i=1}^{|V_{\beta}|} A_{i}| \geq \sum _{i=1}^{|V_{\beta}|} |A_{i}| - \sum _{1\leq i < j \leq |V_{\beta}|} 0> \sum _{i=1}^{|V_{\beta}|} 1  = |V_{\beta}|.$$

This demonstrates that if $V_\alpha$ contains $d$ independent vertices from $V_{i}$, we need to remove at least $d+1$ vertices from $V_{\ell}$: $|V_{\alpha}| < d + n - d - 1 = n-1$, which contradicts our previous conclusion 
that $|V_{\alpha}| \geq n$.  Thus, $V_{\alpha}$ does not contain any vertex from $V_{i}$, $V_{\alpha} = V_{\ell}$, and $|V_\alpha| = n$. (See Figure~\ref{fig:BHVCG6} for examples.)

\end{proof}

\begin{thm}\label{thm:subgraph}Let $T_{s}$ be a start tree with $n$ leaves and at least one internal edge. Splitting one of the internal edges of $T_{s}$  results in two subtrees: $T_{a}$ with $a$ leaves and $T_{b}$ with $b$ leaves where\newline $a+b = n+2$. Then $G_{T_{a}, a, 1}$ and $G_{T_{b}, b, 1}$ are subgraphs of $G_{T_{s}, n, 1}$. Furthermore, $$|E(G_{T_{s}, n, 1})| = |E(G_{T_{a}, a, 1})| + |E(G_{T_{b}, b, 1})|+5(a-2)(b-2),$$ where $E(G)$ represents the edge set of a graph $G$.
\end{thm}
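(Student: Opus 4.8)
The plan is to realise both $G_{T_a,a,1}$ and $G_{T_b,b,1}$ as vertex-disjoint \emph{induced} subgraphs of $G_{T_s,n,1}$ whose vertex sets \emph{partition} $V(G_{T_s,n,1})$, and then to count only the edges of $G_{T_s,n,1}$ that run between the two parts. Write the cut split as $A\mid B$ with $A\cup B=W$ (the leaf set of $T_s$), $p:=|A|\ge 2$, $q:=|B|\ge 2$, so that $a=p+1$, $b=q+1$ and $p+q=n$. Let $v$ be the single new leaf used to form $G_{T_s,n,1}$, and let $c_a$ be the \emph{attachment leaf} created on the $T_a$ side when the internal edge is cut (symmetrically $c_b$ on the $T_b$ side). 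I would define a map $\phi_a$ from splits of the $T_a$–connection leaf set $A\cup\{c_a,v_a\}$ to splits of $\Omega:=W\cup\{v\}$ by simultaneously relabelling the added leaf $v_a\mapsto v$ and \emph{inflating} the single element $c_a$ into the whole block $B$; define $\phi_b$ symmetrically.

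First I would verify that $\phi_a$ is a well-defined injection whose image consists of genuine vertices of $G_{T_s,n,1}$, and that it preserves \emph{and} reflects adjacency. Injectivity is immediate because every image split keeps all of $B$ on a single side, so $\phi_a$ is inverted by contracting $B$ back to $c_a$. Validity follows by a short case check on the edge of $T_a$ carrying $v_a$ (a leaf edge at some $u\in A$, the leaf edge at $c_a$, or an internal edge of $T_a$): in each case the inflated split is exactly the split obtained by adding $v$ to the corresponding edge of $T_s$. For adjacency, recall from Lemma~\ref{lem:compatiblesplitsingraph} and Corollary~\ref{cor:compatibleorthantingraph} that adjacency in a connection graph is compatibility, i.e.\ emptiness of one of the four intersection regions. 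Since $c_a$ lies in exactly one region and inflation replaces it by the nonempty block $B$ in that region (for both splits at once), the pattern of empty regions is unchanged; hence two $T_a$-splits are compatible iff their $\phi_a$-images are. Thus $\phi_a$ is a graph isomorphism onto the induced subgraph on its image, which establishes the subgraph claim.

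Next I would pin down the vertex partition using the explicit description of $V(G_{T_s,n,1})=V_\ell\cup V_i$ from Theorem~\ref{thm:independentsetinGwhenk=1}: the image of $\phi_a$ is exactly the $p$ cherries $\{u,v\}$ with $u\in A$, the single cut-edge split $(B\cup\{v\})\mid A$, and the $2(p-2)$ internal splits arising from the internal edges of $T_s$ lying on the $A$ side, with the symmetric statement for $\phi_b$; the two splits of the cut edge $e$ are distributed one to each side. This gives $3(a-2)$ and $3(b-2)$ vertices, which sum to $3(n-2)=|V(G_{T_s,n,1})|$, and disjointness is visible from the leaf/split structure, so the two images genuinely partition $V(G_{T_s,n,1})$. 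Because the two images are induced subgraphs on complementary vertex sets, every edge of $G_{T_s,n,1}$ is internal to one part or crosses between them, so
\[ |E(G_{T_s,n,1})| = |E(G_{T_a,a,1})| + |E(G_{T_b,b,1})| + X, \]
where $X$ counts the cross edges, and it remains only to show $X=5(a-2)(b-2)$.

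To compute $X$ I would classify the vertices on each side into three types — cherry splits (type $1$), the single cut-edge split (type $2$), and internal-edge splits (type $3$, two per internal edge) — and test compatibility across all nine type pairs with the empty-region criterion, organising the whole computation by which side of a split the leaf $v$ occupies. The resulting per-pair counts are: $(1,1)$ gives $0$ (two distinct cherries both contain $v$, hence never compatible); $(1,2)$ gives $p$ and $(2,1)$ gives $q$; $(1,3)$ gives $p(q-2)$ and $(3,1)$ gives $q(p-2)$ (for each cherry and each far-side internal edge, exactly the split with $v$ on the large side is compatible); $(2,2)$ gives $1$; $(2,3)$ gives $2(q-2)$ and $(3,2)$ gives $2(p-2)$ (both splits of each far-side internal edge are compatible with the cut-edge split); and $(3,3)$ gives $3(p-2)(q-2)$. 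Summing and using $a-2=p-1,\ b-2=q-1$,
\[ p + p(q-2) + q + q(p-2) + 1 + 2(q-2) + 2(p-2) + 3(p-2)(q-2) = 5(p-1)(q-1), \]
which is $5(a-2)(b-2)$, as required. The main obstacle is the $(3,3)$ case: for a pair consisting of an $A$-side internal split and a $B$-side internal split one must check all four placements of $v$ and confirm that exactly the combination placing $v$ on both small sides fails, so that three of the four survive. The remaining cross-terms are single-region checks, but getting the $v$-orientation right in the $(1,3)$/$(3,1)$ and $(2,3)$/$(3,2)$ cases is where off-by-one errors are most likely, so I would record the entire cross-count in one table keyed by the side carrying $v$.
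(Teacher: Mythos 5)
Your proposal is correct and follows essentially the same route as the paper: cut the internal edge, contract each side to a single leaf, realise the two smaller connection graphs inside $G_{T_{s},n,1}$, and count the cross edges by compatibility type. Your write-up is in fact more complete than the paper's, which asserts the cross-edge total as $3(a+b)-13+(a-1)(b-3)+(a-3)(b-1)+3(a-3)(b-3)$ without deriving the first term; your $(1,1),(1,2),(2,1),(2,2),(2,3),(3,2)$ cases sum to exactly that $3(a+b)-13=3n-7$, and you also verify the induced-subgraph (adjacency-reflecting) property of the two images, which the exact edge formula implicitly requires.
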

\begin{proof}
As shown in Figure~\ref{fig:concatenate}, we can obtain two subtrees by splitting one of the internal edges of tree $T_{s}$. If that edge defines the split ${P|Q}$, we will have two different representations for $T_{s}$. In Figure~\ref{fig:concatenate} (b), we use $A$ to represent the subtree with leaf vertices from $Q$; in Figure~\ref{fig:concatenate} (c), we use $B$ to represent the subtree with leaf vertices from $P$. If we treat both $A$ and $B$ as leaf vertices, we can view these two representations as two trees, $T_{a}$ and $T_{b}$, and construct $G_{T_{a}, |P|+1, 1}$ and $G_{T_{b}, |Q|+1, 1}$ by adding a new leaf $v$ to each tree. 

\begin{figure}[ht!]
\includegraphics[scale=.25]{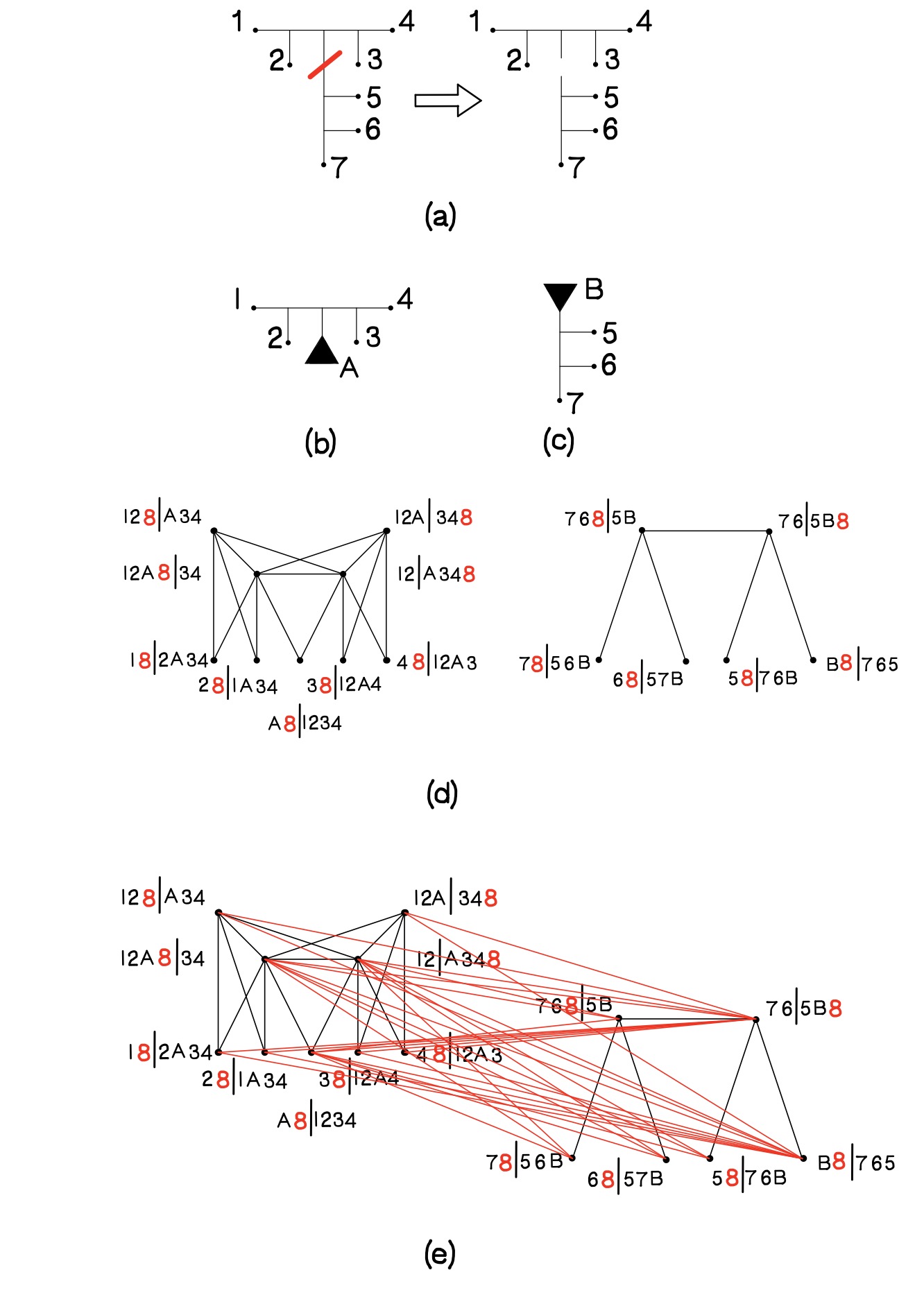}
\centering
\caption{Sub-Figure (a) shows an example for splitting an internal edge of a  tree $T_{s}$ to obtain two smaller trees. In Subfigures (b) and (c), the set $A$ corresponds to the subtree in $T_{s}$ with leaf vertex set $\{5, 6, 7\}$ and the set $B$ corresponds to the subtree in $T_{s}$ with leaf vertex set $\{1, 2, 3, 4\}$. Note figures (b) and (c) are two different representations of $T_{s}$.}
\label{fig:concatenate}
\end{figure}

When we identify $A$ and $B$ with the leaf vertex sets that they represent, the topologies of $G_{T_{a}, |P|+1, 1}$ and $G_{T_{b}, |Q|+1, 1}$ remains unchanged but the leaf vertices now have the same labels as the leaf vertices in $G_{T_{x}, |P|+|Q|, 1}$ obtained by adding $v$ to $T_{s}$. Note that the vertex set $(A\cup \{v\})|B$ and the vertex set $(B\cup \{v\})|A$ correspond to the two splits obtained by adding the new leaf $v$ to the edge ${P|Q}$. So by adding proper edges between $G_{T_{a}, |P|+1, 1}$ and $G_{T_{b}, |Q|+1, 1}$, we obtain $G_{T_{s}, |P|+|Q|, 1}$.

Using the definitions from Theorem~\ref{thm:independentsetinGwhenk=1}, we can define the vertex set: $V$, the internal edge split set: $V_{i}$, and the leaf edge split set: $V_{\ell}$ for $G_{T_{a}, |P|+1, 1}$, $G_{T_{b}, |Q|+1, 1}$, and $G_{T_{s}, |P|+|Q|, 1}$.  We denote these as $$V_{a}, V_{a_{i}}, V_{a_{\ell}}; V_{b}, V_{b_{i}}, V_{b_{\ell}};$$ and  $$V, V_{i}, V_{\ell}$$ respectively.  Observe that for all $u \in V_{b}$, if $u$ induces a split in the form ${M|N}$, then at least one of $M, N$ is a subset of $A \cup \{v\}$.

Next, we consider the new edges that will be added from leaf vertices in both graphs. Since leaf vertex sets will form  independent sets as shown in Theorem~\ref{thm:independentsetinGwhenk=1}, leaf vertices can only connect to internal vertices in the other tree-meaning leaf vertices added to $T_{b}$, for example, can only connect to internal edges adjacent to leaf vertices in $T_{a}$.

Leaf vertices in $V_{a}$ correspond to splits in the form of $\{u,v\}|B\setminus\{u\}, u\in B$. On the other hand, $B\cup\{v\}$ will be a subset of one side of the split for exactly half of the splits represented by internal vertices in $V_{b}$. There are $(a-1)(b-3)$ additional edges from the leaf vertices in $V_{a}$ when added to the internal vertices in $V_{b}$; similarly, there are $(b-1)(a-3)$ additional edges from the leaf vertices in $V_{b}$ added to the internal vertices in $V_{a}$. So another $(a-1)(b-3)+(b-1)(a-3)$ edges will be added in total.

The last case is the edges between $V_{a_{i}}$ and $V_{b_{i}}$. Half of the vertices in $V_{a_{i}}$ will represent splits with one side containing $A\cap \{v\}$ while for a split corresponding to a vertex in $V_{b_{i}}$, the side of the split that does not contain $B$ will be a subset of $A \cap \{v\}$. So half of the vertices in $V_{a_{i}}$ can be connected to all vertices in $V_{b_{i}}$. For the other half of vertices in $V_{a_{i}}$ that represent splits with $A$ and $v$ on different sides, the side with $v$ will be a subset of the side of a split represented by a vertex in $V_{b_{i}}$ that contains $B \cap \{v\}$, which is half of the vertices in $V_{b_{i}}$. Thus, $(a-3)(2(b-3))+(a-3)(b-3) = 3(a-3)(b-3)$ edges will be added. 

In total, the number of edges that need to be added to the $\BHV$ Connection Graph is
\begin{eqnarray*}
&3(a+b)& - 13 + (a-1)(b-3)+(a-3)(b-1)+3(a-3)(b-3)\\
&=& 5ab-10a-10b+20\\
&=&5(a-2)(b-2).
\end{eqnarray*}

\end{proof}

To illustrate the proof of Theorem \ref{thm:subgraph}, consider the situation illustrated in Figure \ref{fig:concatenate}: $A$ and $B$ contain only leaf vertices, Subfigures (b) and (c) are also two subgraphs of $T_{s}$, which we denote $T_{a}$ and $T_{b}$. From this perspective, we can construct the $\BHV$ Connection Graphs $G_{T_{a}, 5, 1}$ and $G_{T_{b}, 4, 1}$ with a new leaf with label 8, as shown in Subfigure (d). If we expand $A$ and $B$ to $\{5, 6, 7\}$ and $\{1, 2, 3, 4\}$ in the vertex names in $G_{T_{a}, 5, 1}$ and $G_{T_{b}, 4, 1}$, the union of the vertex sets in $G_{T_{a}, 5, 1}$ and $G_{T_{b}, 4, 1}$ will be equal to the vertex set of the $\BHV$ Connection Graph $G_{T_{s}, 7, 1}$ with the new leaf with label 8. Adding edges specified by \ref{thm:subgraph}, we will obtain the final $G_{T_{s}, 7, 1}$ shown in Subfigure (a).

\begin{figure}[ht!]
\centering
\label{fig:concatenateBHV}
\end{figure}

The following is one application of Theorem~\ref{thm:subgraph}.
\begin{cor}
The $\BHV$ Connection Graph $G_{T_{s}, n, 1}$ has $\frac{5}{2}(n-2)(n-3)$ edges.
\end{cor}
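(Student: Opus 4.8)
The plan is to prove the corollary by induction on $n$, using the edge-counting recursion of Theorem~\ref{thm:subgraph} as the inductive step. Since the claimed count $\frac{5}{2}(n-2)(n-3)$ depends only on $n$ and not on the topology of $T_{s}$, I would let the induction simultaneously establish this topology-independence: the inductive hypothesis will supply $|E(G_{T,m,1})| = \frac{5}{2}(m-2)(m-3)$ for every fully resolved binary tree $T$ with $m < n$ leaves, regardless of shape.

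For the base case I would take $n = 3$. A fully resolved three-leaf tree is a star with no internal edge, so every split introduced by adding a new leaf is a leaf-edge split; by Theorem~\ref{thm:independentsetinGwhenk=1} these splits form an independent set, and hence $G_{T_{s},3,1}$ has no edges. This matches $\frac{5}{2}(3-2)(3-3) = 0$, so the base case holds.

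For the inductive step, I would fix $n \geq 4$ and a start tree $T_{s}$ with $n$ leaves; since $n \geq 4$, the tree $T_{s}$ has at least one internal edge. Splitting such an edge as in Theorem~\ref{thm:subgraph} produces subtrees $T_{a}$ and $T_{b}$ with $a + b = n+2$, and because each side of an internal split has at least two leaves we have $3 \leq a, b \leq n-1$, so the inductive hypothesis applies to both. Substituting $|E(G_{T_{a},a,1})| = \frac{5}{2}(a-2)(a-3)$ and $|E(G_{T_{b},b,1})| = \frac{5}{2}(b-2)(b-3)$ into the recursion of Theorem~\ref{thm:subgraph}, and writing $p = a-2$ and $q = b-2$ so that $p + q = n-2$, the total becomes
\[
\frac{5}{2}\bigl[p(p-1) + q(q-1) + 2pq\bigr] = \frac{5}{2}(p+q)(p+q-1) = \frac{5}{2}(n-2)(n-3),
\]
which is exactly the claimed value.

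The computation above is routine, so the only points requiring care are structural rather than algebraic: I must verify that the recursion bottoms out correctly, i.e. that both $a$ and $b$ are strictly less than $n$ so the induction is well-founded, and that the base case is $n = 3$ rather than $n = 4$, since splitting the internal edge of a four-leaf tree already produces two three-leaf subtrees. Once these are in place, the elementary identity $p(p-1) + q(q-1) + 2pq = (p+q)(p+q-1)$ does all the work, and the invariance of the edge count under the choice of which internal edge to split follows automatically, because the right-hand side depends on $a$ and $b$ only through their sum $a + b = n + 2$.
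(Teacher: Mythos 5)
Your proof is correct, and it leans on the same key ingredient as the paper --- the edge-count recursion of Theorem~\ref{thm:subgraph} --- but unwinds it differently. The paper specializes the decomposition to always peel off a three-leaf tree ($b=3$), which turns the recursion into the first-order recurrence $E(n) = E(n-1) + 5(n-3)$ with $E(3)=0$, and then solves that recurrence explicitly; this implicitly relies on the fact that every binary tree with at least four leaves can be built by iteratively attaching three-leaf trees (equivalently, that it has a cherry). You instead run a strong induction over all binary trees with fewer leaves, split at an \emph{arbitrary} internal edge, and verify the symmetric identity $p(p-1)+q(q-1)+2pq=(p+q)(p+q-1)$ with $p=a-2$, $q=b-2$, $p+q=n-2$. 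Your route buys two small things: it avoids having to justify the cherry-peeling construction, and it makes explicit that the count is independent of which internal edge is split (since the total depends on $a,b$ only through $a+b=n+2$), a consistency check the paper's argument never confronts. Your bounds $3\le a,b\le n-1$ correctly ensure the induction is well-founded, and your base case $E(3)=0$ (the three splits obtained by attaching the new leaf to a leaf edge of a star are pairwise incompatible, hence an independent set by Theorem~\ref{thm:independentsetinGwhenk=1}) matches the paper's use of $E(3)=0$. Both arguments are sound; yours is marginally more general and self-contained.
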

\begin{proof}
Any tree with more than three leaves can be formed by concatenating a three-leaf tree in the way described in Theorem~\ref{thm:subgraph} via iteration. See Figure~\ref{fig:iterativeTree} for an example. Using the last equation from Theorem~\ref{thm:subgraph}, the number of edges in $\BHV$ Connection Graph $G_{T_{s}, n, 1}$ is 
\[E(n) = E(n-1)+E(3)+5(n-1-2)(3-2) = E(n-1)+5(n-3),
\]
because $E(3) = 0.$

By solving the recurrence relation, we get the following explicit formula:
\[E(n) = \frac{5}{2}(n-2)(n-3).\]

\begin{figure}[ht!]
\includegraphics[scale=.15]{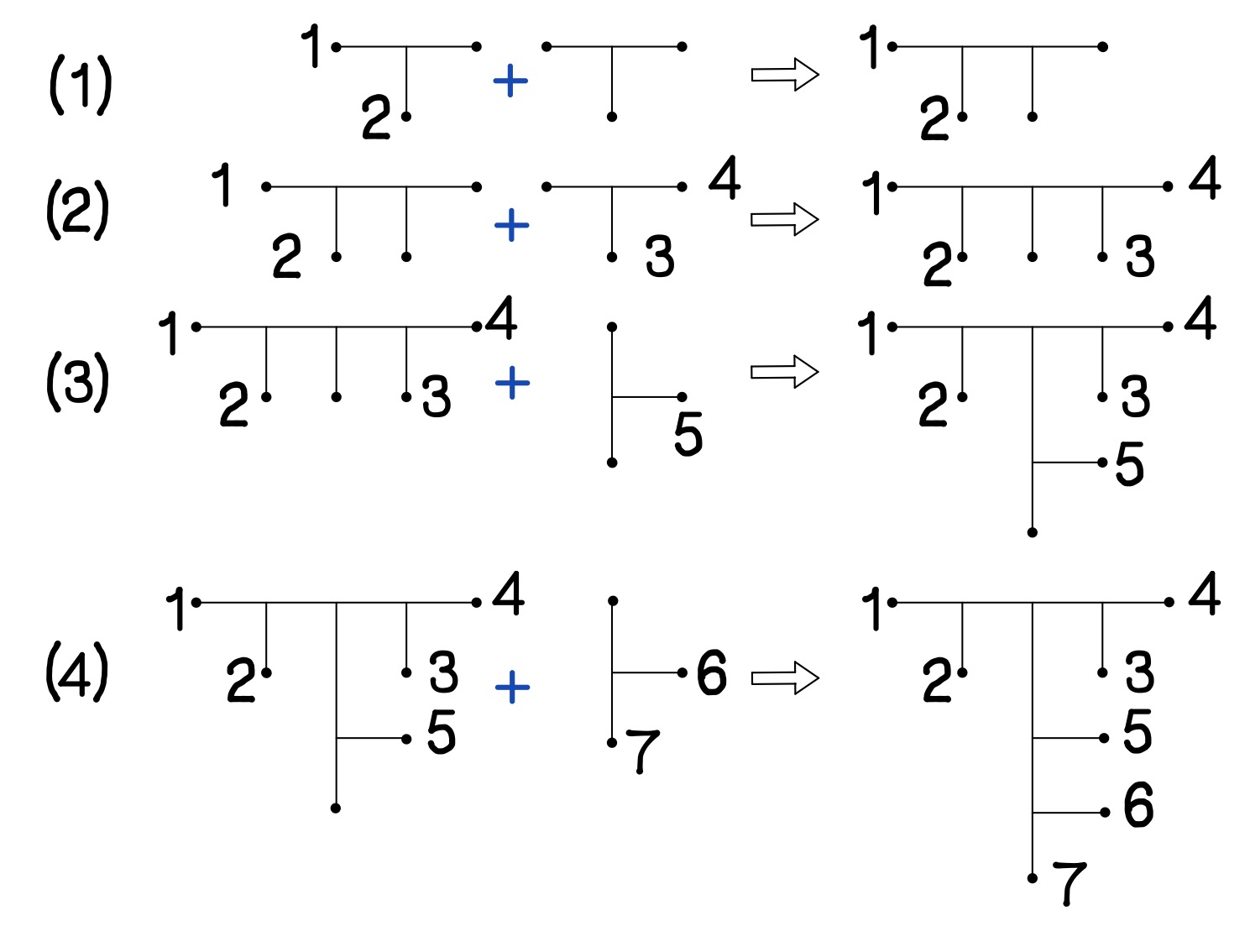}
\centering
\caption{The four iterative steps of adding a three-leaf tree to obtain a target tree $T_{t}$}
\centering
\label{fig:iterativeTree}
\end{figure}
\end{proof}

\section{Discussion}\label{sec:discussion}

The purpose of this manuscript is to provide a combinatorial method to transition between copies of $\BHV_{n}$ where $n$ is allowed to vary. Our combinatorial method only uses the intrinsic geodesic metric of \citep{owen2011fast} and the construction of the $\BHV$ Connection Space, rather than introducing a complex system of mathematical objects that make our results applicable only in limited situations-meaning situations where there are limits on the number of taxa, types of data (meaning conspecific data, gene data, or species trees that should be combined), or the shape of the input and output trees.  In particular, the types of tree topologies in the start tree $T_{s}$ in the beginning of the transition and the end of the the transition $T_{t}$ are not limited beyond the constraint that they are binary.  We make these comments to emphasize that our combinatorial method opens the door to the study of several problems in computational biology that up to this point were impossible to study in $\BHV$ spaces. We list a few examples below. 


\begin{enumerate}

\item There is potential for new supertree construction methods (that widely vary and are still under development) for combining phylogenies on varying numbers of taxa with myriad biological properties \citep{wilkinson2005shape, bininda2004phylogenetic, akanni2015implementing}.  Supertree methods \emph{must} take inputs with varying numbers of taxa to be useful in a biological context. But in \citep{st2017shape} it was made clear that $\BHV$ spaces are well-suited for these methods due to the properties of the intrinsic geodesic metric. 

\item There is potential for new summary (coalesent-based) methods such as those developed in \citep{mirarab2014astral,liu2015coalescent} that combine gene phylogenies such as in \citep{joyner2014use} inferred from short samples from long genomes into a species phylogeny.  In practice, it is rare that genomic information for all species taxa is available, and complex methods for dealing with this issue are of current interest to the computational biology community \citep{streicher2015should, darriba2016prediction, baca2017ultraconserved} when working with biological data.  Summary methods are controversial \citep{chou2015comparative, springer2016gene}, but remain of deep interest to the computational biology community and always require exceptions and non-trivial software advances for cases with missing taxa that appeal to complex solutions \citep{xi2015impact}. While new methods for dealing with missing taxa are in constant production using computational techniques \citep{kobert2016computing, thomas2013pastis, molloy2017include}, it would be surprising if these results did not open the door to the use of $\BHV$ spaces in novel quantitative paradigms for the development of fast and accurate novel methods for summary-based species tree estimation. 

\item Consensus methods such as the majority consensus method can achieve better results when there is not a restriction on the input trees having the same number of taxa. As pointed out in \citep{st2017shape} the geodesic metric allows the construction of paths between trees in a set of trees $\mathcal{T}$ in $\BHV$ spaces that do not inherit splits from trees not in the set $\mathcal{T}$.  Therefore our results may enable a novel quantitative embedding of the consensus problem for phylogenies that is competitive with our outperforms pre-existing methods. We believe this is a natural extension of this project because consensus methods rely on splits, but $\BHV$ spaces have geometric components that allow for polytomic trees, which are not informative in the use of consensus methods that rely on resolved internal splits in a phylogeny. 

\item There is also potential for optimization of the results in this paper by further study of the $\BHV$ Connection Space and the $\BHV$ Connection Cluster.  These are novel objects that may provide a more accessible quantitative framework for addressing problems such as quartet-agglomeration \citep{sumner2017developing, sayyari2016anchoring, reaz2014accurate, avni2015weighted, rusinko2012invariant} that require combining four-taxon trees into trees with any possible number of taxa. We mention the example of quartet-agglomeration in particular because, as explained thoroughly in \citep{sumner2017developing}, the Neighbor-Joining method of tree inference \citep{saitou1987neighbor} is fundamentally a quartet-based method that continues to perform well on many datasets that contain more than four taxa \citep{yoshida2016efficiencies}.  The scientific connection between quartet-agglomeration and Neighbor-Joining should not be ignored if progress in these areas is to be made, and our results provide a way to study these problems in a new setting. 

\item Finally, there is potential for using our new quantitative framework in the study of inferred unrooted trees that contain polytomies (vertices of degree of four or higher). The orthants of non-maximal dimension in any $\BHV_{n}$-space correspond to polytomic trees. Polytomies are often recovered in studies of biological data due to the fact that the biological signal in any dataset may not be strong enough to indicate that branch length in an inferred tree should have length greater than zero. This is known to be an issue in maximum-likelihood tree estimation on biological datasets \citep{pamminger2017testing, simmons2014divergent, slowinski2001molecular}.  Further, it is known that if the true evolutionary history cannot provide sufficient information to resolve a polytomy, distance-based methods, which remain in use not only on their own, but also as components of maximum-likelihood phylogenomic software inference packages such as FastTree-2 \citep{price2010fasttree} are biased against returning the  correct tree \citep{davidson2014distance}.

\end{enumerate}

\section{Supporting materials} 


To generate connection graphs connecting $\BHV$ spaces of specific dimensions using our software, use the script connection graph.py which includes dependencies on Dendropy (see  \citep{sukumaran2010dendropy} for the original paper about this software package). Go to \begin{verbatim}
    https://github.com/cpmoni/igl-polyhedra
\end{verbatim}

to find installation and usage instructions.  Figure~\ref{fig:connectiongraphsample} is the connection graph generated to connect $\BHV$ spaces between 9 and 10 dimensions.  Please note that the other graphs in this paper is drawn by hand rather than the software, in order to fully clarify the mathematical concepts.

\begin{figure}[ht!]
\includegraphics[scale=.6]{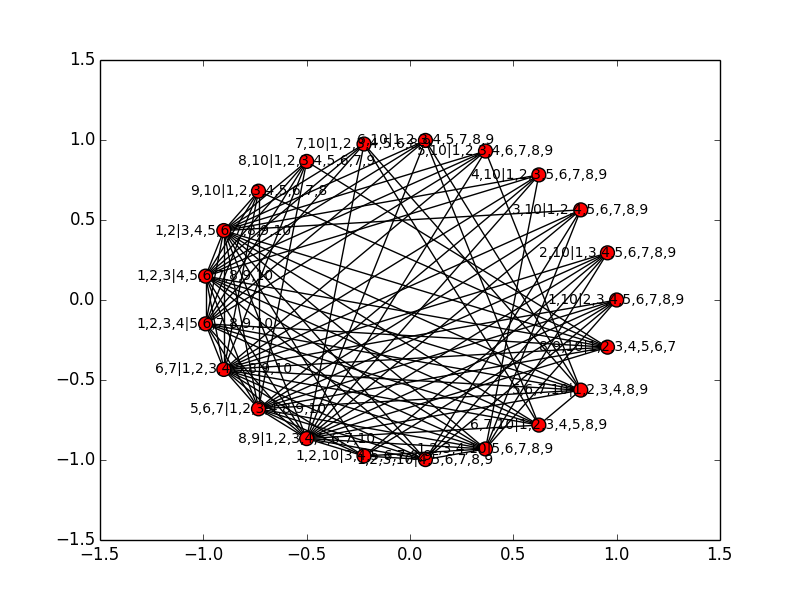}
\centering
\caption{Visualization of the $\BHV$ connection graph connecting 9 and 10 dimensions \\ 
generated by the code released on the github with this paper}
\centering
\label{fig:connectiongraphsample}
\end{figure}

\section{Acknowledgments}
The undergraduate students Y.R, S.Z, J.B., and J. S., as well as the initialization of this project, were supported by a Mathways Grant NSF DMS-1449269 to the Illinois Geometry Lab at the University of Illinois Urbana-Champaign. R.D. was supported by the NSF grant DMS-1401591. M.D. was supported by the NSF Graduate Research Fellowship DGE-1144245.  C.M. was supported by a GAANN fellowship from the Department of Education awarded by the University of Illinois Urbana-Champaign.

\bibliographystyle{kp}
\bibliography{refs}

\end{document}